\newtheorem{theorem}{Theorem}
\newtheorem{observation}{Observation}
\newtheorem{lemma}{Lemma}
\newcommand{\mc}{\mathcal}
\newcommand{\mb}{\mathbf}
\newcommand{\tr}{\mathrm{Tr}}
\newcommand{\red}[1]{{\color{red} #1}}
\newcommand{\blue}[1]{{\color{blue} #1}}
\begin{document}
\title{Quantum gate verification and its application in property testing}

\author{Pei Zeng}
\affiliation{Center for Quantum Information, Institute for Interdisciplinary Information Sciences, Tsinghua University, Beijing 100084, China}
\author{You Zhou}
\email{zyqphy@gmail.com}
\affiliation{Department of Physics, Harvard University, Cambridge, Massachusetts 02138, USA}
\affiliation{CAS Centre for Excellence and Synergetic Innovation Centre in Quantum Information and Quantum Physics, University of Science and Technology of China, Hefei, Anhui 230026, China}
\affiliation{Center for Quantum Information, Institute for Interdisciplinary Information Sciences, Tsinghua University, Beijing 100084, China}
\author{Zhenhuan Liu}
\affiliation{School of Physics, Peking University, Beijing 100871, China}
\affiliation{Center for Quantum Information, Institute for Interdisciplinary Information Sciences, Tsinghua University, Beijing 100084, China}
\begin{abstract}
To guarantee the normal functioning of quantum devices in different scenarios, appropriate benchmarking tool kits are quite significant. Inspired by the recent progress on quantum state verification, here we establish a general framework of verifying a target unitary gate. In both the non-adversarial and adversarial scenarios, we provide efficient methods to evaluate the performance of verification strategies for any qudit unitary gate. Furthermore, we figure out the optimal strategy and its realization with local operations. Specifically, for the commonly-used quantum gates like single qubit and qudit gates, multi-qubit Clifford gates, and multi-qubit generalized Controlled-Z(X) gates, we provide efficient local verification protocols. Besides, we discuss the application of gate verification to the detection of entanglement-preserving property of quantum channels and further quantify the robustness measure of them. We believe that the gate verification is a promising way to benchmark a large-scale quantum circuit as well as to test its property.

\end{abstract}
\maketitle

To build a large-scale and stable quantum system, efficient and robust benchmarking tools are essential \cite{eisert2019quantum}. The core aim of the quantum benchmarking is to establish the correct functioning of a quantum device, so that one can gain the confidence on the final information processing results. A benchmarking process is usually composed of several elements: the unknown target devices, some trusted (or partially characterized) benchmarking devices, and a benchmarking protocol with classical data processing.

While quantum mechanics endows us a large Hilbert space for information processing, whose size increases exponentially with the increase of the qubit number, it also introduces a challenging problem of characterizing the devices in this space. In general, without any prior knowledge on the target device, it on the same time takes exponentially increasing resources to get the full tomographic image of it \cite{Paris2004esimation,Vogel1989Determination}. Fortunately, in most of the cases, one holds some prior knowledge on the possible structure of the target device. With the assistance of this prior knowledge, it is in principle feasible to reduce the benchmarking resources and even characterize the system efficiently with a polynomial number of trials. Some common benchmarking tool kits developed in this spirit and widely applied in experiments are quantum tomography based on compressed sensing \cite{gross2010quantum,Flammia2012compressed}, tensor-network-based quantum tomography \cite{cramer2010efficient,Baumgratz2013Scalable,Lanyon2017Efficient}, permutation-invariant quantum tomography \cite{Toth2010Permutation,Tobias2012Permutation,Zhou2019Decomposition} and direct fidelity estimation \cite{flammia2011direct}, ordered by less information gain or higher efficiency.


On the other hand, the correctness of the benchmarking results usually relies on some assumptions made on the benchmarking devices as well as the target devices. In practice, the quantum gate benchmarking protocols with less assumptions on the benchmarking devices have been proposed, such as gate-set tomography \cite{merkey2013self,blumekohout2013robust} and randomized benchmarking \cite{emerson2005scalable,dankert2009exant,magesan2011scalable}, which can in some sense eliminate the effect of the state preparation and measurement error. Meanwhile, in some quantum information tasks such as quantum key distribution \cite{bennett1984quantum,lo1999unconditional} and blind quantum computation \cite{broadbent2009universal}, the quantum objects might be produced by some adversarial party, which may be correlated among different trials. Thus in these tasks one should make possibly less or no assumption on the target devices.  Currently, the protocol with the least device assumption both on benchmarking and target devices is the self-testing one \cite{brunner2014bell,Ivan2019self}, but is not efficient to extend to multi-partite system in general. As a result, robust benchmarking protocol against correlated noise is significant to explore for practical applications.

Recently, a highly efficient benchmarking protocol called quantum state verification has been introduced \cite{sam2018optimal,zhu2019efficient}. In the verification, one aims to know whether the prepared state $\rho$ is close to the ideal pure state $\ket{\psi}$ in some precision $\epsilon$ for a given significance level $\delta$. The verification is accomplished by a few rounds of 2-outcome verification tests, which constitute the verification operator $\Omega$. Conditioning on the pass of all the tests, one can lower bound the fidelity within a high precision. The efficiency of the verification is determined by the spectral gap of the operator $\Omega$. Comparing to the direct fidelity estimation protocols \cite{flammia2011direct}, the verification protocol is shown to achieve the same fidelity precision with quadratically fewer number of trials.

Inspired by the quantum state verification \cite{sam2018optimal,zhu2019efficient,zhu2019optimal}, here we propose a general framework of the quantum gate verification. The main idea is to map the gate verification to the verification of corresponding Choi state. We first introduce some prior knowledge on the Choi representation and the gate fidelity in Section \ref{Sec:preliminaries}. Then we provide a general framework of quantum gate verification and give the optimal strategies in Section \ref{Sec:framework}. In Section \ref{Sec:typical}, we focus on some typical quantum gates and discuss about their verification strategies. Especially we show that any single-partite (qubit and qudit) gates and Clifford gates can be efficiently verified. In Section \ref{Sec:property}, we discuss the application of the gate verification in testing the properties of quantum channels, such as the robustness of quantum memory \cite{liu2019resource, yuan2019robustness}. Finally, in Section \ref{Sec:conclusion}, we summarize our work, discuss about the possible future direction, and compare it to recent related works.

\section{Preliminaries} \label{Sec:preliminaries}
In this section we first review some essential properties of quantum channels that is related to our discussion.

\subsection{Choi state representation of quantum channels}\label{SSec:duality}
For a quantum system $A$, denote its Hilbert space as $\mc{H}^A$. The set of linear operations on $A$ is denoted as $\mc{L}(\mc{H}^A)$ and the set of quantum states as $\mc{D}(\mc{H}^A)$. Suppose the systems $A$ and $\bar{A}$ own the same dimension and $\mc{B}_A =\{\ket{j}_{A}\}_{j=0}^{d-1}, \mc{B}_{\bar{A}} = \{\ket{j}_{\bar{A}}\}_{j=0}^{d-1}$ are two orthonormal bases of them. The maximally entangled state (with respect to $\mc{B}_A$ and $\mc{B}_{\bar{A}})$ on systems $A,\bar{A}$ is defined to be
\begin{equation}
\ket{\Phi_+}_{A\bar{A}} = \frac{1}{\sqrt{d}} \sum_{j=0}^{d-1} \ket{jj}_{A\bar{A}}.
\end{equation}
and we denote the density matrix $\Phi^{A\bar{A}}_+:= \ket{\Phi_+}_{A\bar{A}}\bra{\Phi_+}$ for simplicity.

A linear map $\mc{E}^{A\to B}: \mc{L}(\mc{H}^A)\to \mc{L}(\mc{H}^B)$ is a quantum channel if and only if (iff) it is a completely positive and trace-preserving (CPTP) map. Denote $\mc{I}_d$ the $d$-dimension identity map. On account of the state-channel duality, the (normalized) Choi state representation of a quantum linear map is defined to be
\begin{equation}
\Phi^{AB}_{\mc{E}} = (\mc{I}^{A\to A} \otimes \mc{E}^{\bar{A}\to B})(\Phi^{A\bar{A}}_+),
\end{equation}
that is, the output state of the map $\mc{I}^{A\to A} \otimes \mc{E}^{\bar{A}\to B}$ with the maximally entangled state as the input state.

The linear map $\mc{E}^{A\to B}$ is completely positive iff $\Phi^{AB}_{\mc{E}}$ is positive; $\mc{E}^{A\to B}$ is trace preserving iff $\tr_B[\Phi^{AB}_{\mc{E}}] = \mathbb{I}_A/d_A$. In this work, we focus on the case when the output dimension $d_B$ is the same as the input dimension $d_A$. We denote $d:= d_A = d_B$. Meanwhile, we omit the superscript of $\mc{E}^{A\to B}$ standing for the system when no ambiguity occurs. Note that as the channel $\mc{E}$ being an unitary $U$, the Choi state is a maximally entangled (pure) state, and we denote the unitary channel as $\mc{U}(\cdot)=U\cdot U^{\dag}$.

The Choi state encodes all the information of the corresponding quantum channel, and one can also obtain the output of the channel by the following relation,
\begin{equation}\label{Eq:dualityAll}
\mc{E}(\rho) = d \tr_A\left[(\rho^T_A \otimes \mathbb{I}_B) \Phi^{AB}_{\mc{E}} \right].
\end{equation}
The state-channel duality is essential to our work, which indicates that verifying the quantum channel is equivalent to verifying the Choi state. We show in Sec.~\ref{Sec:framework} that many results in the state verification can be applied to the current study.

\subsection{Average gate fidelity and entanglement fidelity}
In this work, we focus on benchmarking the quantum gate, say an unitary $U$ on the Hilbert space $\mc{H}_d$. Due to the unavoidable noise, the actual operation realized in an experiment may be a noisy channel $\mc{E}$. Here we use the average gate fidelity to characterize the difference between the ideal unitary gate $\mc{U}$ and the noisy channel $\mc{E}$.
\begin{equation} \label{eq:avgF}
\begin{aligned}
F(\mc{U},\mc{E}) := &\int d \psi \tr \left[\mc{U}(\psi), \mc{E}(\psi)\right]
\end{aligned}
\end{equation}
where the integration is over all the pure state under Haar measure. The average gate fidelity is widely used in the quantum gate benchmarking experiment.

For the corresponding Choi states, the entanglement fidelity is defined as,
\begin{equation} \label{}
\begin{aligned}
F(\mc{U},\mc{E}) := \tr({\Phi_\mc{U}\Phi_{\mc{E}}})=\bra{\Phi_+}\Phi_\Lambda\ket{\Phi_+}.
\end{aligned}
\end{equation}
In fact, there is a direct relation between the average gate fidelity and the entanglement fidelity,
\begin{equation} \label{}
\begin{aligned}
F_A(\mc{U},\mc{E})=\frac{d F_E(\mc{U},\mc{E})+1}{d+1}.
\end{aligned}
\end{equation}

As a result, one can investigate $F_A(\mc{U},\mc{E})$, a practical figure of merit, with $F(\mc{U},\mc{E})$ which is related to the following theoretical derivation. We denote $r(\mc{U},\mc{E}):= 1-F(\mc{U},\mc{E})$ as the entanglement infidelity, and call it infidelity in the following discussion without ambiguity.

\section{General framework of quantum gate verification}\label{Sec:framework}
In this section, we introduce a general framework of quantum gate verification. We first analyze the performance of verification strategies in non-adversarial scenario in Section \ref{SSec:nonad}. We then discuss the optimal verification protocol in Section \ref{SSec:nonOp}, which can be realized in a quite experiment-friendly way. After that, in Section \ref{SSec:adversarial} we extend the verification task to the adversarial scenario, which can be useful in the quantum communication tasks with untrusted quantum channels.

\subsection{Non-adversarial scenario} \label{SSec:nonad}
We start from the i.i.d. (identical and independent distribution) scenario, where a device named Eve is going to produce $N$ rounds of the same quantum channel $\mc{E}$, which should be the unitary gate $\mc{U}$ in the ideal case.
Similar as the state verification, as a user of the channel Alice would like to verify whether the underlying channel is close to the ideal unitary within some $\epsilon$ using $N$ tests under some significance level $\delta$.

On account of the state-channel duality introduced in Sec.~\ref{SSec:duality}, a natural method is to input maximally entangled state and verify the output Choi state directly. However, from a practical point of view, the verification with the maximally entangled state preparation is consumptive and also not robust to the state preparation error. Therefore, in the following discussion, we adopt the strategy that only employs single-partite input states and measurements without ancillaries, that is, in a prepare $\&$ measurement manner.

During each round, Alice prepares a state $\rho_l$, lets it get through the channel $\mc{E}$, and measures it using $2$-outcome positive operator-valued measurement (POVM) operators $\{E_l, 1-E_l\}$, with $0\leq E_l \leq \mathbb{I}$. The state $\rho_l$ and POVM element $E_l$ satisfy
\begin{equation} \label{eq:rholEl1}
\tr[\mc{U}(\rho_l) E_l] = 1.
\end{equation}
We name the combination $(\rho_l, E_l)$ satisfying Eq.~\eqref{eq:rholEl1} as a verification pair for $\mc{U}$.

In different rounds, Alice may adopt different verification pairs $(\rho_l, E_l)$ for testing. Suppose she chooses the pairs with probability $p_l$. The verification pairs $(\rho_l, E_l)$ as well as the probability $p_l$ together compose a strategy $W:= \{p_l, (\rho_l, E_l)\}_l$. The verification protocol is listed as follows.
\begin{enumerate}
    \item For each trial, Alice randomly chooses a verification pair $(\rho_l, E_l)$ with probability $p_l$ from the strategy $W$.
    \item Alice prepares state $\rho_l$, inputs it to the quantum channel $\mathcal{E}$ to be verified, measures the output state using POVM $\{E_l, 1-E_l\}$, and records the test outcome.
    \item Alice performs the above tests for $N$ times. If all the tests pass, Alice estimates the average gate fidelity $F(\mathcal{E},U)\geq 1-\epsilon$ with a significance level $\delta$.
\end{enumerate}

On account of the state-channel duality in Eq.~\eqref{Eq:dualityAll}, Eq.~\eqref{eq:rholEl1} can be reformulated as
\begin{equation}\label{}
d\tr\left[(\rho_l^T\otimes E_l)\Phi_{\mc{U}}\right] = 1.
\end{equation}
and we define the verification operator being
\begin{equation}\label{Eq:stateVer}
\Omega:= d\sum_{l} p_l (\rho^T_l \otimes E_l).
\end{equation}

From this point of view, the verification scheme of a channel is (mathematically) closed related to the one of a maximally entangled state $\Phi_{\mc{U}}$ \cite{zhu2019optimal}. The operator $\Omega$ from the strategy $W$ is denoted as the corresponding verification operator. However, there are still differences between the maximally entangled state verification and the gate verification:
\begin{enumerate}
    \item In the maximally entangled state verification, the possible noisy objects are bipartite states; while in the gate verification, the possible noisy objects are noisy quantum channels, which puts extra limitations on the Choi states compared with the bipartite states.
    \item In the gate verification, the state is prepared deterministically, and the measurement is decided according to the state preparation. Thus one is restricted to the one-way LOCC strategy, comparing to the former bipartite state analysis \cite{wang2019optimal,li2019efficient,Yu2019Optimal}.
\end{enumerate}


Now we study the performance of the verification protocol, which is usually characterized by the minimum number of trials $N(\epsilon,\delta,\Omega)$ for a given infidelity upper bound $\epsilon$, significance level $\delta$, and verification operator $\Omega$. That is, if the verification succeeds in $N$ rounds, one can confirm that the fidelity between the underlying noisy channel and the target unitary is larger than $1-\epsilon$ with probability $1-\delta$.

The minimum number of trials $N(\epsilon,\delta,\Omega)$ is directly related to the maximal passing probability $P(\epsilon,\Omega)$. For the noisy channel with entanglement infidelity $r_E(\mc{U},\mc{E})$ not smaller than $\epsilon$, the maximal pass probability (corresponding to the type-II error of hypothesis testing) is,
\begin{equation}\label{Eq:probE}
\begin{aligned}
P(\epsilon, \Omega) &= \max_{r_E(\mc{U},\mc{E})\geq \epsilon} \tr[\Omega \Phi_\mc{E}] \\
&\leq \max_{\tr[\Phi_{\mc{U}}\rho]\leq 1-\epsilon} \tr[\Omega\rho] = 1 - \nu(\Omega)\epsilon.
\end{aligned}
\end{equation}
Here the first maximization is on all the possible channel $\mc{E}$, and the Choi state should satisfy an additional constraint $\tr_B[\Phi^{AB}_{\mc{E}}] = \mathbb{I}_A/d_A$ than the quantum state verification. Thus the followed inequality acts as an useful upper bound of the pass probability. Here $\nu(\Omega):= 1 - \beta(\Omega)$ is the spectral gap of $\Omega$, with $\beta(\Omega)$ being the second largest eigenvalue. Note that $P(\epsilon,\Omega)$ can be written as a semidefinite program,
\begin{equation} \label{Eq:nonProgram}
\begin{aligned}
\max\quad & \tr\left[\Omega\Phi_\mc{E}^{AB}\right] \\
s.t.\quad & \tr\left[\Phi_\mc{U}^{AB}\Phi_\mc{E}^{AB}\right]\leq 1-\epsilon, \\
& \tr_B\left[\Phi_\mc{E}^{AB}\right] = \frac{\mathbb{I}_d}{d}, \\
& \Phi_\mc{E}^{AB} \geq 0.
\end{aligned}
\end{equation}

Given a verification operator $\Omega$, under the condition of all the $N$ test trials pass, for the significance level $\delta$, i.e., $P(\epsilon,\Omega)^N\leq \delta$, the minimal number of the verification trials $N$ is,
\begin{equation}\label{Eq:upTrial}
\begin{aligned}
N(\epsilon, \delta, \Omega) &= \left\lceil \frac{\ln \delta^{-1}}{\ln P(\epsilon,\Omega)^{-1}}\right\rceil \\
&\leq \left\lceil \frac{\ln \delta^{-1}}{\ln [1 - \nu(\Omega)\epsilon]^{-1}}\right\rceil \leq [\nu(\Omega)\epsilon]^{-1}\ln \delta^{-1},
\end{aligned}
\end{equation}
Here the first inequality is due to the upper bound in Eq.~\eqref{Eq:probE}, which is generally loose.

To reduce the trial number, one should minimize the passing probability in Eq.~\eqref{Eq:probE} for all possible verification operator, and the optimal one is
\begin{equation}\label{Eq:mini}
\begin{aligned}
P^{op}(\epsilon) &=\min_{\Omega}P(\epsilon,\Omega),\\
&=\min_{\Omega}\max_{r_E(\mc{U},\mc{E})\geq \epsilon} \tr(\Omega \Phi_\mc{E}),
\end{aligned}
\end{equation}
where the operator $\Omega$ is from all verification strategy $W$ given by Eq.~\eqref{Eq:stateVer}. The optimal trial number is then $N^{op}(\epsilon, \delta)=\left\lceil \frac{\ln \delta^{-1}}{\ln P^{op}(\epsilon)^{-1}}\right\rceil$. In the following, we show some properties of $P(\epsilon,\Omega)$, which are helpful for its optimization in the next section.
\begin{observation}\label{Ob:convex}
The pass probability $P(\epsilon,\Omega)$ defined in Eq.~\eqref{Eq:probE} is a non-decreasing convex function on the verification operator $\Omega$. That is, $P(\epsilon,\Omega') \geq P(\epsilon,\Omega)$ if $\Omega'-\Omega\geq 0$ is semidefinite positive, and
\begin{equation}\label{Eq:probConv}
\begin{aligned}
P(\epsilon,\Omega') \leq p_1P(\Omega_1,\epsilon)+ p_2P(\Omega_2,\epsilon),
\end{aligned}
\end{equation}
with $\Omega'=p_1\Omega_1+p_2\Omega_2$, $p_1+p_2=1,\ p_1,p_2\geq 0$.
\end{observation}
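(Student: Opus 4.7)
The plan is to observe that $P(\epsilon,\Omega)$ is defined as the maximum of a linear functional in $\Omega$ (namely $\Omega \mapsto \tr[\Omega\Phi_\mathcal{E}]$) over a feasible set of Choi states $\Phi_\mathcal{E}$ that does \emph{not} depend on $\Omega$. Both the monotonicity and the convexity then follow from generic properties of pointwise maxima of linear functions over a common constraint set, combined with the fact that every feasible $\Phi_\mathcal{E}$ is positive semidefinite.

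For the monotonicity claim, I would argue as follows. Fix any $\mathcal{E}$ that is feasible in the sense that $r_E(\mathcal{U},\mathcal{E}) \geq \epsilon$ and whose Choi state $\Phi_\mathcal{E}$ satisfies $\Phi_\mathcal{E} \geq 0$ and $\tr_B[\Phi_\mathcal{E}] = \mathbb{I}_A/d$. Since $\Omega' - \Omega \geq 0$ and $\Phi_\mathcal{E} \geq 0$, one has $\tr[(\Omega'-\Omega)\Phi_\mathcal{E}] \geq 0$, i.e.\ $\tr[\Omega'\Phi_\mathcal{E}] \geq \tr[\Omega\Phi_\mathcal{E}]$. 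Taking the maximum over the feasible set on both sides preserves the inequality and yields $P(\epsilon,\Omega') \geq P(\epsilon,\Omega)$.

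For the convexity claim, I would use the same strategy. Let $\Omega' = p_1\Omega_1 + p_2\Omega_2$ with $p_1 + p_2 = 1$, $p_i \geq 0$. For every feasible $\mathcal{E}$,
\begin{equation}
\tr[\Omega'\Phi_\mathcal{E}] = p_1\tr[\Omega_1\Phi_\mathcal{E}] + p_2\tr[\Omega_2\Phi_\mathcal{E}] \leq p_1 P(\epsilon,\Omega_1) + p_2 P(\epsilon,\Omega_2),
\end{equation}
by definition of $P(\epsilon,\Omega_i)$ as the maximum. Since the right-hand side is independent of $\mathcal{E}$, taking the maximum over feasible $\mathcal{E}$ on the left gives the desired inequality $P(\epsilon,\Omega') \leq p_1 P(\epsilon,\Omega_1) + p_2 P(\epsilon,\Omega_2)$.

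There is no genuine obstacle here: the only subtlety is making sure the feasible set in the definition of $P(\epsilon,\Omega)$ is the same for all $\Omega$ under consideration, which is immediate because the constraints $r_E(\mathcal{U},\mathcal{E}) \geq \epsilon$, $\Phi_\mathcal{E} \geq 0$, and $\tr_B[\Phi_\mathcal{E}] = \mathbb{I}/d$ make no reference to $\Omega$. Hence both statements reduce to the elementary fact that a pointwise supremum of affine functions is convex and non-decreasing in the coefficients when tested against positive operators.
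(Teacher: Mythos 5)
Your proof is correct and is exactly the standard argument the paper implicitly relies on (the observation is stated without an explicit proof there): monotonicity follows from $\tr[(\Omega'-\Omega)\Phi_{\mc{E}}]\geq 0$ for positive semidefinite factors, and convexity from $P$ being a pointwise maximum of functions linear in $\Omega$ over a feasible set that does not depend on $\Omega$. Your remark that the feasible set is $\Omega$-independent is precisely the point that makes both steps go through.
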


In practice, the noisy channels $\{\mc{E}_k\}$ during different trials may be different with each other. In this case, a well-defined estimation value would be the averaged infidelity over different rounds
\begin{equation}
\bar{r}(\mc{U},\{\mc{E}_k\}) = \frac{1}{N} \sum_{k=1}^{N} r(\mc{U},\mc{E}_k).
\end{equation}
Similar to the discussion of the quantum state verification \cite{zhu2019general}, with the same verification schemes $W$, one can actually bound the average infidelity $\bar{r}(\mc{U},\{\mc{E}_k\})$ using Eq.~\eqref{Eq:upTrial}.


\subsection{Optimal verification with pure state inputs and projective measurements}\label{SSec:nonOp}

In this section, we provide the optimal verification of any unitary channel $\mc{U}$ under pure state inputs and project measurements (PVM), which is easier for the experiment realization. Suppose there is a verification strategy $W:= \{p_l, (\rho_l, E_l)\}_l$ for the identity channel $\mc{I}$, then any unitary $\mc{U}$ can be verified with $W':= \{p_l, (\rho_l, \mc{U}(E_l))\}_l$. Consequently, without loss of generality we focus on the optimal verification of $\mc{I}$ in the following discussion.

To find the optimal verification of $\mc{I}$, we have the following two lemmas to convert an arbitrary verification operator $\Omega$ to the corresponding Bell-diagonal form without reducing its preformance.
\begin{lemma}\label{Lm:Uinvary}
Under the unitary transformation $\mc{V}$, the verification strategy $W:= \{p_l, (\rho_l, E_l)\}_l$ of the identity channel $\mc{I}$ becomes
$W':= \{p_l, (\mc{V}(\rho_l), \mc{V}(E_l)\}_l$. The pass probability is invariant under the transformation
\begin{equation}\label{Eq:probOmegaprime}
\begin{aligned}
P(\epsilon,\Omega')=P(\epsilon,\Omega),
\end{aligned}
\end{equation}
where the verification operators $\Omega$ and $\Omega'$ are from $W$ and $W'$ respectively and
\begin{equation}\label{Eq:stateVer2}
\begin{aligned}
\Omega'&= d\sum_{l} p_l (\mc{V}(\rho_l)^T \otimes \mc{V}(E_l)) \\
&= d\sum_{l} p_l \mc{V}^*(\rho_l^*) \otimes \mc{V}(E_l)=\mc{V}^*\otimes \mc{V} (\Omega).
\end{aligned}
\end{equation}
\end{lemma}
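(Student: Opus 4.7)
The lemma breaks cleanly into two parts: deriving the stated expression for $\Omega'$, and verifying the invariance $P(\epsilon,\Omega')=P(\epsilon,\Omega)$. I would attack them in this order.

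For the first part, I would substitute directly into Eq.~(\ref{Eq:stateVer}). The only identity needed is
$$
(\mc{V}(\rho_l))^T = (V\rho_l V^\dagger)^T = V^*\rho_l^T V^T = \mc{V}^*(\rho_l^T),
$$
where $\mc{V}^*$ denotes the complex-conjugate channel $X\mapsto V^* X V^T$. Plugging in gives $\Omega' = (\mc{V}^*\otimes \mc{V})(\Omega)$, which (using $\rho_l^* = \rho_l^T$ for Hermitian $\rho_l$) matches Eq.~(\ref{Eq:stateVer2}). Before proceeding I would also check that $W'$ is a legitimate strategy for $\mc{I}$, i.e., $\tr[\mc{V}(\rho_l)\mc{V}(E_l)] = \tr[\rho_l E_l] = 1$, which is immediate from the unitarity of $V$.

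For the second part, the plan is to exhibit an objective-preserving bijection on the feasible set of the SDP in Eq.~(\ref{Eq:nonProgram}). Define
$$
\tilde{\Phi} := (V^T\otimes V^\dagger)\,\Phi_\mc{E}\,(V^*\otimes V).
$$
Cyclicity of the trace together with the formula for $\Omega'$ immediately yields $\tr[\Omega'\,\Phi_\mc{E}] = \tr[\Omega\,\tilde{\Phi}]$, so it suffices to check that $\Phi_\mc{E}\mapsto\tilde{\Phi}$ maps the feasible set of the $\Omega$-SDP bijectively onto that of the $\Omega'$-SDP. Positivity is preserved trivially under unitary conjugation. For the trace-preserving constraint $\tr_B[\tilde{\Phi}] = \mathbb{I}/d$, I would pull the action of $V^\dagger,V$ on system $B$ inside the partial trace, using $\tr_B[(I\otimes A^\dagger)X(I\otimes A)] = \tr_B[X]$ for any unitary $A$, reducing the claim to $V^T(\mathbb{I}/d)V^* = \mathbb{I}/d$, which holds because the rows of a unitary matrix are orthonormal.

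The genuinely nontrivial step — and the one I expect to be the main obstacle — is preservation of the fidelity constraint $\tr[\Phi_+\Phi_\mc{E}]\le 1-\epsilon$ (recall $\Phi_\mc{I} = \Phi_+$ for the identity channel). Here I would invoke the standard ricochet identity $(A\otimes I)\ket{\Phi_+} = (I\otimes A^T)\ket{\Phi_+}$ to compute
$$
(V^*\otimes V)\ket{\Phi_+} = (V^*\otimes I)(I\otimes V)\ket{\Phi_+} = (V^* V^T\otimes I)\ket{\Phi_+} = \ket{\Phi_+},
$$
whence $(V^*\otimes V)\,\Phi_+\,(V^T\otimes V^\dagger) = \Phi_+$, and cyclicity then yields $\tr[\Phi_+\tilde{\Phi}] = \tr[\Phi_+\Phi_\mc{E}]$. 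The inverse map is obtained by swapping $V\leftrightarrow V^\dagger$, so the bijection is explicit. Combining these pieces, the two SDPs take the same value at matched feasible points and hence share the same optimum, establishing Eq.~(\ref{Eq:probOmegaprime}). The only subtlety worth rechecking throughout is the bookkeeping among $V$, $V^*$, $V^T$, and $V^\dagger$ — every claim above rests on keeping these straight.
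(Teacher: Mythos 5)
Your proposal is correct and follows essentially the same route as the paper: both establish $P(\epsilon,\Omega')=P(\epsilon,\Omega)$ by conjugating feasible Choi states with $\mc{V}^*\otimes\mc{V}$ (or its inverse) and using cyclicity plus the invariance $(V^*\otimes V)\ket{\Phi_+}=\ket{\Phi_+}$. The only difference is that you explicitly verify that this map preserves the full feasible set of the SDP (positivity, the partial-trace constraint, and the fidelity constraint), which the paper's proof leaves implicit; this is a welcome tightening rather than a new idea.
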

\begin{proof}
First, note that $\tr[\Omega'\Phi_+]=\tr[\Omega [\mc{V}^*\otimes \mc{V}]^{\dag}(\Phi_+)]=\tr[\Omega \Phi_+]$=1, thus $\Phi_+$ can pass the verification also for $\Omega'$.
Suppose a state $\Phi_{\mc{E}}$ reaches the maximal value of $P(\epsilon,\Omega)$ according to Eq.~\eqref{Eq:probE}, then one can find $\Phi'_{\mc{E}}=\mc{V}^*\otimes \mc{V} (\Phi_{\mc{E}})$ such that $\tr[\Omega' \Phi'_\mc{E}]=\tr[\Omega \Phi_\mc{E}]$. As a result, $P(\epsilon,\Omega')\geq P(\epsilon,\Omega)$. Since the unitary is reversible, similarly one can also get that $P(\epsilon,\Omega')\leq P(\epsilon,\Omega)$, and thus $P(\epsilon,\Omega')=P(\epsilon,\Omega)$.
\end{proof}

\begin{lemma}\label{Lm:Belldiag}
For a verification operator $\Omega$ of the identity channel $\mc{I}$, one can find the corresponding Bell-diagonal verification operator
\begin{equation} \label{Eq:bellT}
\begin{aligned}
\Omega' &= \frac1{d^2}\sum_{u,v=0}^{d-1}\mc{W}^*(u,v)\otimes \mc{W}(u,v) (\Omega)\\
&=\sum_{u,v=0}^{d-1} \lambda_{u,v} \Phi_{u,v},
\end{aligned}
\end{equation}
where $\mc{W}(u,v)$ labeled by $u,v$ are $d^2$ unitary channels of the Weyl operator introduced in Appendix \ref{Sec:quditBell},
such that the pass probability does not increase, i.e., $P(\epsilon,\Omega')\leq P(\epsilon,\Omega)$.
\end{lemma}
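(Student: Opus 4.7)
The plan is to combine Lemma~\ref{Lm:Uinvary} with the convexity of $P(\epsilon,\cdot)$ from Observation~\ref{Ob:convex}, viewing the doubly-covariant Weyl sum as a twirl that simultaneously (i) preserves the pass probability on each branch and (ii) projects onto the Bell-diagonal subspace. Concretely, I would set $\Omega_{u,v}:=\mc{W}^*(u,v)\otimes \mc{W}(u,v)(\Omega)$ so that $\Omega' = \frac{1}{d^2}\sum_{u,v}\Omega_{u,v}$.

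First I would invoke Lemma~\ref{Lm:Uinvary} with $\mc{V}=\mc{W}(u,v)$ for each $(u,v)\in\mathbb{Z}_d\times\mathbb{Z}_d$. This identifies $\Omega_{u,v}$ as the verification operator associated with the strategy $\{p_l,(\mc{W}(u,v)(\rho_l),\mc{W}(u,v)(E_l))\}_l$ for the identity channel, and yields $P(\epsilon,\Omega_{u,v})=P(\epsilon,\Omega)$. Since $\Omega'$ is then a uniform convex mixture of the $\Omega_{u,v}$, applying the convexity inequality in Observation~\ref{Ob:convex} gives
\begin{equation*}
P(\epsilon,\Omega') \le \frac{1}{d^2}\sum_{u,v}P(\epsilon,\Omega_{u,v}) = P(\epsilon,\Omega),
\end{equation*}
which is the inequality asserted by the lemma.

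Next I would establish the Bell-diagonal form in Eq.~\eqref{Eq:bellT}. Using the convention $\ket{\Phi_{u',v'}} = (\mathbb{I}\otimes W(u',v'))\ket{\Phi_+}$ from Appendix~\ref{Sec:quditBell} together with the Weyl commutation relations, a short computation shows $(W(u,v)^{*}\otimes W(u,v))\ket{\Phi_{u',v'}} = \omega^{\phi(u,v,u',v')}\ket{\Phi_{u',v'}}$ for an explicit phase $\phi$, so every Bell projector $\Phi_{u',v'}$ is a fixed point of the channel $\mc{W}^*(u,v)\otimes\mc{W}(u,v)$. Expanding $\Omega = \sum_{u',v',u'',v''} c_{u'v',u''v''}\ket{\Phi_{u',v'}}\bra{\Phi_{u'',v''}}$ in the Bell basis and averaging, the diagonal entries are preserved while the off-diagonal entries acquire the weight $\frac{1}{d^2}\sum_{u,v}\omega^{\phi(u,v,u',v')-\phi(u,v,u'',v'')}$, which vanishes for $(u',v')\ne(u'',v'')$ by character orthogonality on $\mathbb{Z}_d\times\mathbb{Z}_d$. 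This leaves $\Omega'=\sum_{u,v}\lambda_{u,v}\Phi_{u,v}$ with $\lambda_{u,v}=\bra{\Phi_{u,v}}\Omega\ket{\Phi_{u,v}}\ge 0$.

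The main technical obstacle I expect is the bookkeeping of the phase $\phi(u,v,u',v')$ and checking that the resulting characters on $\mathbb{Z}_d\times\mathbb{Z}_d$ are genuinely non-trivial whenever $(u',v')\ne(u'',v'')$, which is what makes character orthogonality annihilate the off-diagonal entries. This is convention-sensitive but is the standard ``Weyl twirl kills non-Bell-diagonal components'' calculation for the Heisenberg--Weyl group; once it is carried out with the definitions fixed in Appendix~\ref{Sec:quditBell}, the Bell-diagonal form in Eq.~\eqref{Eq:bellT} follows at once, combined with the convexity inequality above, the lemma is complete.
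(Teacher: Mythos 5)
Your proposal is correct and follows essentially the same route as the paper: the paper likewise establishes the Bell-diagonal form by showing that the Weyl twirl annihilates each off-diagonal element $\ket{\Phi_{w_1}}\bra{\Phi_{w_2}}$ via the phase sum $\sum_{w}a^{(w,w_1-w_2)}=d^2\,\delta_{w_1,w_2}$ (your character-orthogonality step), and then obtains $P(\epsilon,\Omega')\leq P(\epsilon,\Omega)$ by viewing $\Omega'$ as a uniform mixture of the operators $\mc{W}^*(u,v)\otimes\mc{W}(u,v)(\Omega)$ and combining Lemma~\ref{Lm:Uinvary} with the convexity in Observation~\ref{Ob:convex}. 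The only difference is that the paper carries out the phase bookkeeping explicitly, which you defer as a routine check.
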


The proof of Lemma \ref{Lm:Belldiag} is in Appendix~\ref{Sec:proofs2}.

\begin{theorem}\label{Th:optimal}
For any unitary $\mc{U}$ on $\mc{H}_d$, one can construct the optimal verification strategy with pure state inputs and projective measurements. The optimal verification operator is
\begin{equation}\label{Eq:OpOp}
\begin{aligned}
\Omega_{op} =\frac{\mathbb{I}+d\Phi_{\mc{U}}}{1+d}.
\end{aligned}
\end{equation}
and the optimal pass probability and trial number are
\begin{equation}\label{Eq:Op2}
\begin{aligned}
P^{op}(\epsilon) &= 1-\frac{d}{d+1}\epsilon, \\
N^{op}(\epsilon, \delta) &= \left\lceil \frac{\ln \delta^{-1}}{\ln \left(1 - \frac{d}{d+1}\epsilon\right)^{-1}}\right\rceil \leq \left\lceil \frac{d+1}{d\epsilon}\ln \delta^{-1}\right\rceil.
\end{aligned}
\end{equation}
\end{theorem}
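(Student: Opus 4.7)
The plan is to (i) reduce to the identity channel $\mc{U}=\mc{I}$ and to Bell-diagonal verification operators, (ii) derive a matching lower bound on $P(\epsilon,\Omega)$ from a concrete adversarial Choi state and the purity/projectivity constraints, and (iii) exhibit a strategy attaining the bound via a complex projective 2-design.

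For step (i), the remark just before the theorem sends any strategy $\{p_l,(\rho_l,E_l)\}$ for $\mc{I}$ to one for $\mc{U}$ by replacing $E_l\mapsto\mc{U}(E_l)$, mapping $\Phi_+\leftrightarrow\Phi_{\mc{U}}$ in all relevant expressions; so I would assume $\mc{U}=\mc{I}$. Lemma~\ref{Lm:Belldiag} then reduces to Bell-diagonal $\Omega=\sum_{u,v}\lambda_{u,v}\Phi_{u,v}$, because its component twirls $(\rho_l,E_l)\mapsto(W(u,v)\rho_l W^\dagger(u,v),W(u,v)E_l W^\dagger(u,v))$ preserve both purity of the input and projectivity of the POVM. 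The verification condition $\tr[\Omega\Phi_+]=1$ forces $\lambda_{0,0}=1$.

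For step (ii), I would feed the program~\eqref{Eq:nonProgram} the explicit attack $\Phi_{\mc{E}}=(1-\epsilon)\Phi_+ +\epsilon\Phi_{u^*,v^*}$, where $(u^*,v^*)$ attains $\max_{(u,v)\neq(0,0)}\lambda_{u,v}$. Since every Bell state has reduced state $\mathbb{I}/d$ and the Bell states are mutually orthogonal, $\Phi_{\mc{E}}$ is a valid Choi state of infidelity exactly $\epsilon$, giving
\begin{equation*}
P(\epsilon,\Omega)\geq 1-\epsilon\bigl(1-\max_{(u,v)\neq(0,0)}\lambda_{u,v}\bigr).
\end{equation*}
The pure-input/PVM restriction now enters through a simple but critical observation: each $E_l$ is a projector containing the pure state $\rho_l$ in its range, so $\tr[E_l]\geq 1$ and hence $\tr[\Omega]=d\sum_l p_l\tr[E_l]\geq d$. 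Subtracting $\lambda_{0,0}=1$ and distributing over the remaining $d^2-1$ eigenvalues forces $\max_{(u,v)\neq(0,0)}\lambda_{u,v}\geq(d-1)/(d^2-1)=1/(d+1)$, i.e.\ $P^{op}(\epsilon)\geq 1-\epsilon d/(d+1)$. I expect this trace inequality to be the main obstacle: converting ``pure inputs, projective measurements'' into a usable algebraic bound is easy to miss, and the bound is tight only when every $E_l$ is rank one.

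For step (iii), I would use the partial-transpose identity $P_{\mathrm{sym}}^{T_A}=(\mathbb{I}+d\Phi_+)/2$, which follows from $(\mathrm{SWAP})^{T_A}=d\Phi_+$, with $P_{\mathrm{sym}}$ the symmetric-subspace projector on $\mc{H}_d\otimes\mc{H}_d$. Taking any complex projective 2-design $\{p_l,\rho_l\}$ in $\mc{H}_d$ and setting $E_l=\rho_l$ (rank one), the defining identity $\sum_l p_l\rho_l^{\otimes 2}=2P_{\mathrm{sym}}/[d(d+1)]$ partially transposes and rescales to
\begin{equation*}
\Omega=d\sum_l p_l\rho_l^T\otimes\rho_l=\frac{\mathbb{I}+d\Phi_+}{d+1}=\Omega_{op}.
\end{equation*}
The spectrum of $\Omega_{op}$ is $1$ on $\Phi_+$ and $1/(d+1)$ on its orthogonal complement, so $\nu(\Omega_{op})=d/(d+1)$; Eq.~\eqref{Eq:upTrial} then yields the claimed $P^{op}(\epsilon)$ and $N^{op}(\epsilon,\delta)$.
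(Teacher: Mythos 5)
Your proof is correct, and its skeleton matches the paper's: reduce to the identity channel, use Lemma~\ref{Lm:Belldiag} to restrict to Bell-diagonal operators, and realize $\Omega_{op}$ via a weighted complex projective $2$-design (the paper uses the same partial-transpose identity implicitly through the conjugate-basis/MUB construction). Where you genuinely diverge is in the converse: the paper does not prove the lower bound $P(\epsilon,\Omega)\geq 1-\tfrac{d}{d+1}\epsilon$ itself, but imports the known optimality of $\frac{\mathbb{I}+d\Phi_+}{1+d}$ among separable-measurement state-verification strategies from Refs.~\cite{Hayashi2006LOCC,zhu2019optimal}, arguing only that Bell-diagonality makes the channel and state problems coincide so that this optimum is attainable under the Choi-state constraint. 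You instead derive the bound from first principles: the normalization $\tr[\Omega]=d\sum_l p_l\tr[E_l]\geq d$ (which, note, already follows for arbitrary POVM elements and mixed inputs, since $\tr[\mc{U}(\rho_l)E_l]=1$ with $E_l\leq\mathbb{I}$ forces $E_l$ to act as the identity on the support of $\mc{U}(\rho_l)$ — so the restriction to pure inputs and projectors is not actually what makes the inequality work, and your optimality claim holds over all prepare-and-measure strategies), combined with pigeonhole over the $d^2-1$ nontrivial Bell eigenvalues and the explicit two-Bell-state attack, which is a legal Choi state of infidelity exactly $\epsilon$. This buys a self-contained proof that does not lean on the external state-verification literature, and it makes transparent exactly which constraint (trace normalization of $\Omega$, not purity or projectivity) is responsible for the $1/(d+1)$ spectral-gap ceiling; the paper's route is shorter but leaves the crucial optimality step as a citation.
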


\begin{proof}
Without loss of generality, we consider the identity channel $\mc{I}$ here. Based on Lemma \ref{Lm:Belldiag}, to find the optimal verification one only needs to investigate $\Omega$ in the Bell-diagonal form. In this case, the channel verification and the state verification become coincident, that is, the first inequality in Eq.~\eqref{Eq:probE} is saturated. To be specific, the maximization of $\tr[\Omega\rho]=\tr[\Omega\rho_{\mathrm{diag}}]$ is equivalent for the Bell-diagonal states, which are legal Choi states.

At the same time, for the state verification, the optimal verification operator with separable measurements \cite{Hayashi2006LOCC,zhu2019optimal} is
\begin{equation}\label{Eq:stateOmegaOp}
\begin{aligned}
\Omega_{op} =\frac{\mathbb{I}+d\Phi_+}{1+d},
\end{aligned}
\end{equation}
which is clearly Bell-diagonal, thus can be reached by quantum channel verification. It is clear that the optimal gap here is $\nu(\Omega_{op})=\frac{d}{d+1}$.

Now we show that $\Omega_{op}$ can be constructed in a preparation and measurement manner. The optimal operator $\Omega_{op}$ can be realized by the so called conjugate-basis (CB) projector of an orthogonal basis $\mathcal{B}=\{\psi^{d-1}_{i=0}\}$ in $\mc{H}_d$ \cite{zhu2019optimal},
\begin{equation}
\begin{aligned}
P(\mathcal{B})= \sum_{\psi_l\in\mathcal{B}}\psi^*_l\otimes \psi_l.
\end{aligned}
\end{equation}
That is, $\Omega_{op}=\frac1{d+1}\sum_{l=1}^{d+1} P(\mathcal{B}_l)$, when ${B}_l$ are $d+1$ mutually unbiased bases (MUBs). If the dimension is not a prime power, the verification operator can be realized by $\Omega_{op}=\sum_{\alpha}p_{\alpha} \phi^*_{\alpha}\otimes \phi_{\alpha}$ and $\sum_{\alpha} p_{\alpha}=d$, with the weighted complex projective 2-design $\{p_{\alpha},\ \phi^*_{\alpha}\}$ \cite{zhu2019optimal,ZAUNER2011DESIGNS,Renes2004Symmetric}.

Finally, according to  Eq.~\eqref{Eq:dualityAll}, the corresponding verification strategy of $\mc{I}$ shows, $\{\frac1{d(d+1)},\ (\psi^i_l, \psi^i_l)\}$, where $\psi^i_l$ is from $(d+1)$ MUB ${B}_l$. That is, we input $\psi^i_l$ and measure $\psi^i_l$ with equal probability. For the unitary $\mc{U}$, the verification strategy is  $\{\frac1{d(d+1)},\ [\psi^i_l, \mc{U}(\psi^i_l)]\}$. One can find the strategy of $\Omega$ constructed from $2$-designs in a similar manner.
\end{proof}

Practically, one may prefer to implement the verification with less MUBs, due to the reasons that there are no enough MUBs in the Hilbert space or to reduce the experiment resources. The verification can be built with less MUBs, $\Omega=\frac1{g}\sum_{l=1}^{g} P(\mathcal{B}_l)$, and the spectral gap is $\nu(\Omega)=(g-1)/g$ \cite{zhu2019optimal}. According to Eq.~\eqref{Eq:upTrial}, the trial number is upper bounded by,
\begin{equation}\label{}
\begin{aligned}
N(\epsilon, \delta)\leq \left\lceil \frac{\ln \delta^{-1}}{\ln \left(1 - \frac{g-1}{g}\epsilon\right)^{-1}}\right\rceil \leq \left\lceil\frac{g}{(g-1)\epsilon}\ln \delta^{-1}\right\rceil,
\end{aligned}
\end{equation}
Note that the bound may be not tight, however it is economical. For example, one can finish the verification with only two bases with the trial number only about two times overhead than the optimal one.

\subsection{Adversarial scenario}\label{SSec:adversarial}

In the discussion above, we suppose the implemented quantum gates are independent for different rounds. However, this may not be true in general. In some practical quantum information tasks, the quantum channels in different rounds will be correlated. For example, when Alice produces the uncharacterized gates with memory effect, the gate noise in the former rounds may affect the latter gate realization. On the other hand, in some quantum communication tasks, the quantum channels may be held by some untrusted parties Eve, e.g. entangled state distribution and quantum key distribution \cite{lo1999unconditional}. In this case, the adversarial Eve may be even more powerful so that he can take advantage of the correlations between different rounds \cite{zhu2019general}. Eve may produce a large composite quantum channel
\begin{equation}
\mc{E}_{(N+1)} : \mc{D}((\mc{H}^A)^{\otimes (N+1)}) \to \mc{D}((\mc{H}^B)^{\otimes (N+1)}).
\end{equation}
with arbitrarily correlated noise. We will leave out the subscript $(N+1)$ in the later discussion in this section, i.e., $\mc{E}:=\mc{E}_{(N+1)}$.

To verify the quantum channel in this case, we suppose Alice (and Bob) is able to randomly choose $N$ rounds from the overall $(N+1)$ rounds to perform the verification test. She (They) leaves the left round to perform the real quantum information processing task. In Fig.~\ref{fig:ChanVer}(b), we describe the adversarial channel verification with two parties.

The possibility that the $N$ rounds of tests pass is
\begin{equation} \label{Eq:pmcE}
p_{\mc{E}} = \tr\left[(\Omega^{\otimes N}\otimes I)\Phi_{\mc{E}_{(N+1)}}\right],
\end{equation}
where without loss of generality, we assume the test is on the first $N$ qubits, and in the same time $\Phi_{\mc{E}_{(N+1)}}$ is permutation-invariant.
Conditioning on the passing of $N$ rounds tests, Alice would like to confirm that the reduced $(N+1)$-th round quantum channel given by the reduced Choi state
\begin{equation}
\Phi_{\mc{E}'} = p^{-1}_{\mc{E}}\tr_{1\sim N}\left[(\Omega^{\otimes N}\otimes I)\Phi_{\mc{E}_{(N+1)}}\right],
\end{equation}
is closed to the target unitary $U$. The entanglement fidelity between $\Phi_{\mc{E}'}$ and $U$ is
\begin{equation}
\begin{aligned}
F(\mc{E}', \mc{U}) &= \tr(\Phi_{\mc{E}'}\Phi_\mc{U}) \\
&= p^{-1}_{\mc{E}} \tr\left[(\Omega^{\otimes N}\otimes \Phi_{\mc{U}})\Phi_{\mc{E}_{(N+1)}}\right] = p^{-1}_{\mc{E}} f_{\mc{E}},
\end{aligned}
\end{equation}
where
\begin{equation} \label{Eq:fmcE}
f_{\mc{E}}:= \tr\left[(\Omega^{\otimes N}\otimes \Phi_{\mc{U}})\Phi_{\mc{E}_{(N+1)}}\right].
\end{equation}

\begin{figure*}[htbp]
\includegraphics[width=16cm]{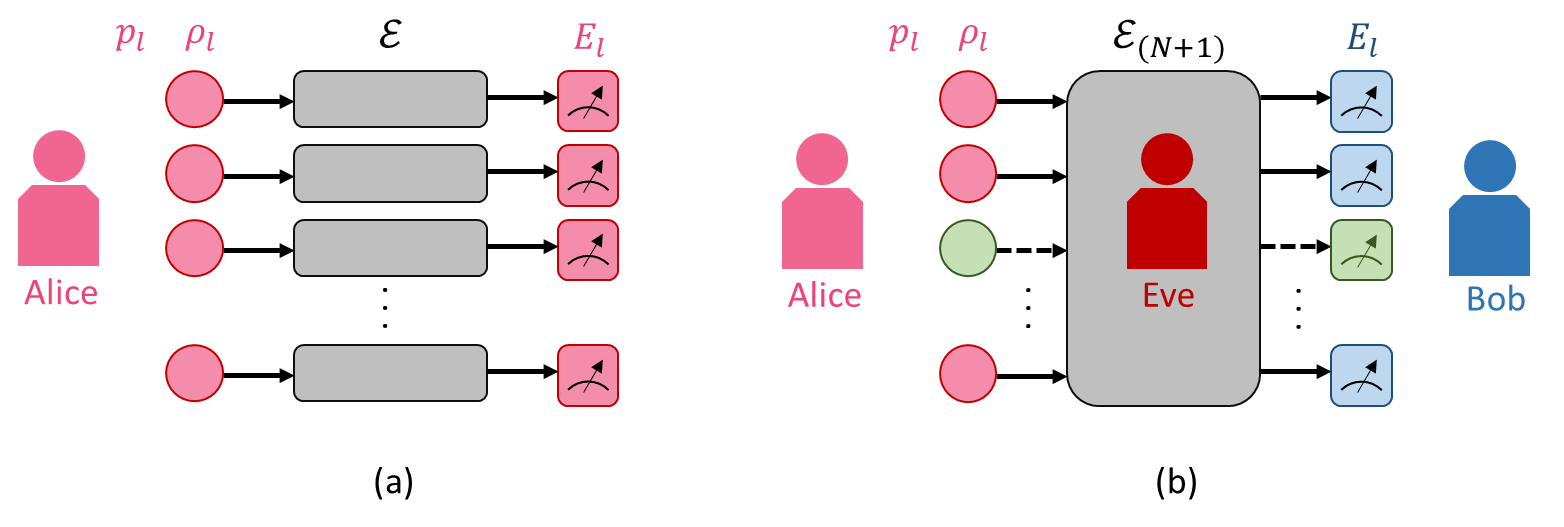}
\caption{The non-adversarial scenario and adversarial scenario. (a) In the non-adversarial scenario, Alice prepare the states $\rho_l$, sends it to an uncharacterized channel and performs measurement $E_l$ on it. The channels of different trials are independent with each other. (b) In the adversarial scenario with two communication parties, Alice prepare the states $\rho_l$, sends it to an untrusted channel, Bob then receives output states from the channel. After Alice announces the random test rounds, Bob performs measurement $E_l$ on them and estimate the gate for the left turn (shown in green). The channels of different trials are correlated with each other.} \label{fig:ChanVer}
\end{figure*}

The core task in adversarial scenario is to verify whether the channel used for the task round is the target unitary channel $\mc{U}$. Similarly to the state verification discussion in Ref.~\cite{zhu2019general}, we define the estimated (entanglement) fidelity lower bound with respect to the number of test rounds $N$, a failure probability of $\delta$, and the verification strategy $\Omega$
\begin{equation}
F(N,\delta,\Omega) := \min_{\Phi_{\mc{E}}}\{p^{-1}_{\mc{E}} f_{\mc{E}} | p_{\mc{E}} \geq \delta\}, \quad 0<\delta<1,
\end{equation}
where $\Phi_{\mc{E}}$ take values over all Choi states. The number of trials lower bound with respect to a precision of $\epsilon$, a failure probability of $\delta$, and the verification strategy $\Omega$ is defined to be
\begin{equation}
N(\epsilon,\delta,\Omega) := \min\{N| F(N,\delta,\Omega)\geq 1-\epsilon \}.
\end{equation}

For convenience of the later discussion, we also define the bipartite \emph{state} verification parameters
\begin{equation}
\begin{aligned}
F_S(N,\delta,\Omega) &:= \min_{\rho} \{p^{-1}_{\rho} f_{\rho} | p_{\rho} \geq \delta\}, \\
N_S(\epsilon,\delta,\Omega) &:= \min\{N| F_S(N,\delta,\Omega)\geq 1-\epsilon \}.\\
\end{aligned}
\end{equation}
Here the optimization is taken over all the $2(N+1)$-qudit $(\bigotimes_{i=1}^{N+1}\mc{H}_i)^{\otimes2}$ bipartite state $\rho$, and $p_{\rho}, f_{\rho}$ is defined by replacing $\Phi_\mc{E}$ in Eq.~\eqref{Eq:pmcE} and Eq.~\eqref{Eq:fmcE} to $\rho$. It is obvious that $F(N,\delta,\Omega)\geq F_S(N,\delta,\Omega)$ and $N(\epsilon,\delta,\Omega)\leq N_S(\epsilon,\delta,\Omega)$. Therefore, the bipartite state verification parameter $F_S(N,\delta,\Omega)$ and $N_S(\epsilon,\delta,\Omega)$ are the lower bound and upper bound of $F(N,\delta,\Omega)$ and $N(\epsilon,\delta,\Omega)$ respectively. One can apply the analysis in Ref.~\cite{zhu2019efficient,zhu2019general} to estimate $N_S(\epsilon,\delta,\Omega)$ and $F_S(N,\delta,\Omega)$, which provides a useful bound for $N(\epsilon,\delta,\Omega)$ and $F(N,\delta,\Omega)$.

For a general strategy $\Omega$, $F(N,\delta,\Omega)$ can be expressed as the following programming problem
\begin{equation}
\begin{aligned}
\min \quad & \tr\left[(\Omega^{\otimes N}\otimes \Phi_\mc{U})\Phi_{\mc{E}}\right]/\tr\left[(\Omega^{\otimes N}\otimes I)\Phi_{\mc{E}}\right] \\
s.t. \quad & \tr\left[(\Omega^{\otimes N}\otimes I)\Phi_{\mc{E}}\right] \geq \delta \\
& \tr_B\left[\Phi_\mc{E}\right] = \left(\frac{\mathbb{I}_d}{d}\right)^{\otimes (N+1)} \\
& \Phi_\mc{E} \geq 0,
\end{aligned}
\end{equation}
which is not easy to find an analytical solution in general.

In the following paragraphs, we show the method to find the optimal verification schemes as well as to analyze its performance. We first consider the verification operator in the Bell-diagonal form, and show that the figure of merits equal to the ones of the state. Then, we extend the analysis to a general type of verification operators which are called Bell-supported, and show that they are always sub-optimal to a homogeneous strategy. Finally, we solve the optimal homogeneous strategy and the performance of it.

\begin{observation}\label{Ob:adBelldiag}
For a verification strategy $\Omega$ of a quantum gate $U$ which is bell-diagonal under a local unitary transformation, i.e.,
\begin{equation} \label{Eq:adbelldiagonal}
\Omega = \sum_{u,v=0}^{d-1} \lambda_{u,v} \tilde{\Phi}_{u,v}^{AB},
\end{equation}
where $\{\tilde{\Phi}_{u,v}^{AB}\}$ are the qudit Bell states $\{\Phi_{u,v}^{AB}\}$ under local unitary transformation on system $A$ and $B$, and $\tilde{\Phi}_{0,0}^{AB} = \Phi_{\mc{U}}^{AB}, \lambda_{0,0}=1$, we have
\begin{equation}
\begin{aligned}
F(N,\delta,\Omega) = F_S(N,\delta,\Omega), \\
N(\epsilon,\delta,\Omega) = N_S(\epsilon,\delta,\Omega). \\
\end{aligned}
\end{equation}

\end{observation}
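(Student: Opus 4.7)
My plan is to prove the two identities by establishing each inequality separately.

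The easy direction $F(N,\delta,\Omega) \geq F_S(N,\delta,\Omega)$ (and dually $N(\epsilon,\delta,\Omega) \leq N_S(\epsilon,\delta,\Omega)$) follows immediately because a multi-round Choi state is, by definition, a bipartite state satisfying the additional trace-preserving constraint $\tr_B[\Phi_\mc{E}] = (\mathbb{I}_d/d)^{\otimes(N+1)}$. Minimizing over a smaller feasible set yields a larger minimum; this is the observation already made in the text just above.

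For the reverse direction, I will exhibit for every bipartite state $\rho$ a Choi state $\rho'$ with the same $p_{\rho'} = p_\rho$ and $f_{\rho'} = f_\rho$, via a per-round Weyl twirling. Fix local unitaries $V_A, V_B$ such that $(V_A\otimes V_B)\Phi_{u,v}(V_A^\dagger\otimes V_B^\dagger) = \tilde{\Phi}_{u,v}$ with $\tilde{\Phi}_{0,0} = \Phi_{\mc{U}}$; these exist because $\Phi_{\mc{U}}$ is maximally entangled. Consider the standard single-round Weyl twirl
\begin{equation*}
\mathcal{T}_0(\sigma) := \frac{1}{d^2}\sum_{u,v=0}^{d-1}\bigl(W(u,v) \otimes W(u,v)^*\bigr)\,\sigma\,\bigl(W(u,v)\otimes W(u,v)^*\bigr)^\dagger,
\end{equation*}
and define the rotated per-round twirl $\mathcal{T}_i(\cdot) := (V_A\otimes V_B)\,\mathcal{T}_0\!\bigl((V_A^\dagger\otimes V_B^\dagger)(\cdot)(V_A\otimes V_B)\bigr)(V_A^\dagger\otimes V_B^\dagger)$ acting on the $A_iB_i$ subsystem. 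The global twirl is $\mathcal{T} := \bigotimes_{i=1}^{N+1}\mathcal{T}_i$, and I set $\rho' := \mathcal{T}(\rho)$.

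Two properties remain to verify: (a) $\rho'$ is a valid Choi state; (b) $p_{\rho'} = p_\rho$ and $f_{\rho'} = f_\rho$. For (a), the identity $(W\otimes W^*)\ket{\Phi_+} = \ket{\Phi_+}$ and a short calculation give $\tr_{B}[(W\otimes W^*)\sigma(W\otimes W^*)^\dagger] = W\tr_B(\sigma)W^\dagger$; averaging over $(u,v)$ and using the unitary 1-design property of the Weyl operators yields $\tr_B[\mathcal{T}_0(\sigma)] = \tr(\sigma)\,\mathbb{I}/d$. This partial-trace property is preserved by the local conjugation by $V_A, V_B$ and by tensoring across rounds. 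For (b), the hypothesis makes $\Omega$ diagonal in the $\{\tilde{\Phi}_{u,v}\}$ basis; combined with $\mathbb{I} = \sum_{u,v}\tilde{\Phi}_{u,v}$ and $\Phi_{\mc{U}} = \tilde{\Phi}_{0,0}$, both $\Omega^{\otimes N}\otimes\mathbb{I}$ and $\Omega^{\otimes N}\otimes\Phi_{\mc{U}}$ are diagonal in the product $\tilde{\Phi}$-basis. The Weyl twirl is the orthogonal projection onto this diagonal subspace and is self-adjoint (the Weyl-operator average is symmetric under $(u,v) \mapsto (-u,-v)$), so both operators are fixed points of $\mathcal{T}$. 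Consequently $\tr[(\Omega^{\otimes N}\otimes X)\rho'] = \tr[\mathcal{T}(\Omega^{\otimes N}\otimes X)\,\rho] = \tr[(\Omega^{\otimes N}\otimes X)\rho]$ for $X\in\{\mathbb{I},\Phi_{\mc{U}}\}$.

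Combining (a) and (b), the infimum defining $F_S(N,\delta,\Omega)$ is attained on the Choi-state feasible set, so $F_S \geq F$, and the identity of trial numbers follows by duality. The main technical obstacle is simply checking that the Weyl twirl simultaneously imposes the trace-preserving marginal and fixes the Bell-diagonal operators pointwise; both reduce to the two standard identities $(W\otimes W^*)\ket{\Phi_+} = \ket{\Phi_+}$ and $\frac{1}{d^2}\sum_{u,v}W(u,v)\,X\,W(u,v)^\dagger = \tr(X)\,\mathbb{I}/d$, already built into the Weyl formalism used elsewhere in the paper.
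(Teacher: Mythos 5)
Your proof is correct and follows essentially the same route as the paper's: both arguments rest on the two facts that $p_{\mc{E}}$ and $f_{\mc{E}}$ depend only on the Bell-diagonal part of the state and that Bell-diagonal operators automatically satisfy the Choi marginal constraint $\tr_B[\cdot]=\mathbb{I}/d$. You merely make the paper's ``without loss of generality, assume the state is Bell-diagonal'' step constructive, via the self-adjoint Weyl twirl that the paper itself employs in the proof of Lemma \ref{Lm:Belldiag}, which is a slightly more explicit packaging of the same idea.
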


\begin{proof}
We first simplify the expression of $F(N,\delta,\Omega)$. Due to the random assignment of test rounds, it is not restrictive to consider the permutation-invariant $\Phi_{\mc{E}}$ only. Similar to the discussion in Ref.~\cite{zhu2019general}, one can define the permutation-invariant Bell basis
\begin{equation}
\tilde{\Phi}_{\mb{k}} = \hat{\mb{P}}_S(\tilde{\Phi}_{0,0}^{\otimes k_{0,0}}\otimes \tilde{\Phi}_{0,1}^{\otimes k_{0,1}} \otimes ...\otimes \tilde{\Phi}_{d-1,d-1}^{\otimes k_{d-1,d-1}}),
\end{equation}
where $\hat{\mb{P}}_S$ is the symmetrization operator, mixing all possible permutation with respect to different rounds, $k:=[k_{0,0}, k_{0,1}, ..., k_{d-1,d-1}]$ is a sequence of nonnegative integer number with $\sum_{u,v} k_{u,v} = N+1$.

Since $p_{\mc{E}}$ and $f_{\mc{E}}$ in Eqs.~\eqref{Eq:pmcE},~\eqref{Eq:fmcE} only depend on the diagonal elements of $\Phi_{\mc{E}}$ in the Bell basis, without loss of generality, we may assume that the Choi state is diagonal in the product basis of $\tilde{\Phi}_{u,v}$. We only need to consider the Choi state $\Phi_{\mc{E}}$ as the mixture of $\tilde{\Phi}_{\mb{k}}$
\begin{equation}
\Phi_{\mc{E}} = \sum_{\mb{k}\in\mb{K}} c_{\mb{k}} \tilde{\Phi}_{\mb{k}},
\end{equation}
where $\{c_{\mb{k}}\}$ are the nonnegative mixing coefficients with $\sum_{k\in\mb{K}} c_{\mb{k}} = 1$, and $\mb{K}$ is the set of all possible $\mb{k}$. Note that, the $\tilde{\Phi}_{u,v}$-basis naturally meets the requirements of Choi states, i.e., $\tr_B[\Phi_{u,v}^{AB}] = I_d/d$. As a result, the optimization is over the whole convex hull made by $\{\Phi_{\mb{k}}\}$, similar to the state case in Ref.~\cite{zhu2019general}. Therefore,
\begin{equation}
\begin{aligned}
F(N,\delta,\Omega) &= \min_{\Phi_{\mc{E}}}\{p_{\mc{E}}^{-1}f_{\mc{E}}|p_{\mc{E}}\geq \delta\} \\
&= \min_{\{c_{\mb{k}}\}}\{ p_{\mc{E}}^{-1}f_{\mc{E}}|p_{\mc{E}}\geq \delta \} \\
&= F_S(N,\delta,\Omega).
\end{aligned}
\end{equation}
\end{proof}

A strategy $\Omega$ for unitary $\mc{U}$ with the form
\begin{equation}
\Omega = \Phi_{\mc{U}} + \lambda(1 - \Phi_{\mc{U}}), \quad (0\leq \lambda<1),
\end{equation}
is called homogeneous. Note that the homogeneous strategy is a specific case of bell-diagonal strategy. The eigenvalues of such $\Omega$ except the largest one are all degenerated to be $\lambda$. It was shown in Ref.~\cite{zhu2019general} that the following optimization of the quantum state verification
\begin{equation}
\max_{\Omega} F_S(N, \delta, \Omega)
\end{equation}
can always be achieved by the homogeneous strategy for given $N$ and $\delta$.

Now we discuss the optimal strategy $\Omega$ for the quantum gate verification and first introduce some notations. We call a strategy $\Omega$ \emph{useless} under given $N$ and $\delta$ if no Choi state $\Phi_{\mc{E}}$ meets the requirement
\begin{equation}
p_{\mc{E}} \geq \delta.
\end{equation}

By spectrum decomposition, a strategy $\Omega$ can be written in the following unique form
\begin{equation}
\Omega = \sum_{j=0}^{J-1} \lambda_j \Pi_j,
\end{equation}
where $J<d$ is the number of different eigenvalues, $\lambda_0 = 1 > \lambda_1 > ... > \lambda_{J-1}\geq 0$, and $\Pi_j$ is the projector onto the eigenspace with eigenvalue $\lambda_j$, whose rank may be larger than 1. If there exists a maximally entangled state $\Phi_{e}$ such that $\Phi_{e}\subseteq\Pi_j$, we call the $\Pi_j$ space is \emph{Bell-supported}. Denote the set of Bell-supported $\{\Pi_j\}$ of $\Omega$ as $\mb{S}(\Omega)$. Obviously, $\Pi_0\subseteq \mb{S}(\Omega)$. If a strategy has Bell-supported projector set $\mb{S}(\Omega)$ with at least one elements else than $\Pi_0$, we call the strategy $\Omega$ is Bell-supported. The Bell-diagonal strategies are the extreme cases of Bell-supported strategies, where $\mb{S}(\Omega)$ span the whole operator space of $\Omega$.

For the Bell-supported strategies, we have the following lemma.
\begin{lemma} \label{Lm:Bellsupport}
For a Bell-supported strategy $\Omega$, denote a subset of $\mb{S}(\Omega)$ as $\mb{S}_0(\Omega)\subseteq \mb{S}(\Omega)$ which contains $\Pi_0$ and at least another element $\Pi_j$. Denote the set of eigenvalues corresponding to the projects in $\mb{S}_0(\Omega)$ as $\lambda(\mb{S}_0(\Omega))$. If we construct a new strategy $\Omega'$ with the following form
\begin{equation}
\Omega' = \sum_{j|\Pi_j\in \mb{S}_0(\Omega)} \lambda_j \Pi_j  + \sum_{j|\Pi_j\notin \mb{S}_0(\Omega)} \tilde{\lambda}_j \Pi_j,
\end{equation}
where $\tilde{\lambda}_j$ can take any value in $\lambda(\mb{S}_0(\Omega))$ except for $\lambda_0=1$, then
\begin{equation}
F(N,\delta,\Omega') \geq F(N,\delta,\Omega)
\end{equation}
if $\Omega'$ is not useless given $N$ and $\delta$.
\end{lemma}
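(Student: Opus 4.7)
The plan is to establish $F(N,\delta,\Omega') \geq F(N,\delta,\Omega)$ by exhibiting, for every feasible adversary Choi state $\Phi_{\mc{E}'}$ under $\Omega'$, a feasible adversary Choi state $\Phi_{\mc{E}}$ under $\Omega$ with no larger value of the ratio $p^{-1}f$. Taking the infimum over $\Phi_{\mc{E}'}$ will then yield the lemma. The construction exploits the crucial fact that $\Omega'$ shares the spectral projectors $\{\Pi_j\}$ with $\Omega$ and differs only in the eigenvalues attached to projectors outside $\mb{S}_0$.

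First I would write $p_{\mc{E}}$ and $f_{\mc{E}}$ using the common eigenprojector decomposition $\Omega^{\otimes N}=\sum_{\mb{j}}\bigl(\prod_i \lambda_{j_i}\bigr)\,Q_{\mb{j}}$, where $Q_{\mb{j}}=\Pi_{j_1}\otimes\cdots\otimes\Pi_{j_N}$, and the analogous decomposition for $\Omega'^{\otimes N}$ with reassigned eigenvalues $\tilde\lambda_j$ on the non-$\mb{S}_0$ slots. The sum splits into terms where every $j_i$ corresponds to a projector in $\mb{S}_0$ (where $\Omega$ and $\Omega'$ act identically and nothing needs to be done) and terms containing at least one non-$\mb{S}_0$ projector; only the latter need to be reshaped.

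Second, for each $\Pi_j \notin \mb{S}_0$ whose reassigned eigenvalue $\tilde\lambda_j$ matches $\lambda_k$ for some $\Pi_k \in \mb{S}_0$, I would replace the part of $\Phi_{\mc{E}'}$ supported on that slot's $\Pi_j$ by a contribution supported on a maximally entangled state $\ket{\Phi_k} \in \Pi_k$, which exists by the Bell-supported hypothesis. Because $\ket{\Phi_k}\bra{\Phi_k}$ has maximally mixed marginals on both $A$ and $B$, this slot-wise substitution preserves the Choi-state partial-trace constraint $\tr_B[\Phi_{\mc{E}}] = \mathbb{I}/d^{N+1}$ throughout, and it preserves the pass probability since the substituted piece carries the matching eigenvalue $\lambda_k = \tilde\lambda_j$; hence feasibility $p_{\mc{E}} \geq \delta$ is retained.

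The main obstacle is controlling the entanglement fidelity after the substitution, namely showing $f_{\mc{E}} \leq f_{\mc{E}'}$. I would handle this by invoking a Weyl-twirl within each Bell-supported eigenspace $\Pi_k$ in the spirit of Lemma \ref{Lm:Belldiag}, which guarantees that an appropriately chosen Bell-state representative in $\Pi_k$ has overlap with $\Phi_{\mc{U}}$ no larger than that of the original block of $\Phi_{\mc{E}'}$, together with Observation \ref{Ob:adBelldiag}, which collapses the channel and state optimizations once the relevant contributions become Bell-diagonal. The delicate bookkeeping of performing this replacement simultaneously for every tensor slot while preserving the global Choi-state structure is the most technical aspect; however, the argument goes through cleanly because the substitution decouples slot by slot and preserves partial traces block by block.
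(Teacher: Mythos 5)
Your overall skeleton matches the paper's: both arguments exploit the fact that $\Omega$ and $\Omega'$ share the eigenprojectors $\{\Pi_j\}$, and both replace components living in non-Bell-supported eigenspaces by Bell states in Bell-supported eigenspaces carrying the matching eigenvalue, so that the $(p_{\mc{E}},f_{\mc{E}})$ region of $\Omega'$ embeds into that of $\Omega$. However, two of your key claims do not hold as stated. First, the assertion that the slot-wise substitution ``preserves the Choi-state partial-trace constraint throughout'' is wrong: the piece you remove --- the block of $\Phi_{\mc{E}'}$ on a non-Bell-supported eigenspace $\Pi_j$ --- generally does \emph{not} have maximally mixed marginals (that is precisely what distinguishes it from a Bell state), so swapping it for $\ket{\Phi_k}\bra{\Phi_k}$ changes $\tr_B[\Phi_{\mc{E}}]$ and the constructed state need not be a legal Choi state, hence is not feasible for the minimization defining $F(N,\delta,\Omega)$. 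The paper avoids this by first reducing to states diagonal in the common eigenbasis and permutation-invariant, and then showing that the \emph{same point} $(p_{\mc{E}},f_{\mc{E}})$ is re-achieved by a mixture of symmetrized tensor products of Bell eigenvectors only; such basis elements satisfy the marginal constraint automatically, so legality is restored rather than ``preserved.'' You need this re-achievement step (or an equivalent repair of the marginals); without it the constructed adversary witnesses nothing about $F(N,\delta,\Omega)$.

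Second, your proposed control of the fidelity via ``a Weyl-twirl within each Bell-supported eigenspace $\Pi_k$'' is not a well-defined operation here: the twirl of Lemma~\ref{Lm:Belldiag} does not preserve the eigenspaces of a strategy that is Bell-supported but not Bell-diagonal (and Observation~\ref{Ob:adBelldiag} only applies after Bell-diagonality is established, which is exactly what Lemma~\ref{Lm:Bellsupport} is supposed to go beyond). The machinery is also unnecessary: since the lemma forbids $\tilde{\lambda}_j=\lambda_0$, every replacement Bell state lies in some $\Pi_k$ with $k\neq 0$ and is therefore orthogonal to $\Phi_{\mc{U}}$, exactly like the component it replaces. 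Both $p_{\mc{E}}$ and $f_{\mc{E}}$ depend on the (diagonalized) state only through the eigenvalue pattern on the test slots and the weight of the $\Phi_{\mc{U}}$ component on the task slot, all of which are unchanged by the substitution; so $f$ is exactly equal, not merely bounded, and no twirling argument is needed. As written, the proof has a genuine feasibility gap and an inapplicable step where the fidelity bound is supposed to come from.
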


Lemma \ref{Lm:Bellsupport} implies that, for the Bell-supported strategy $\Omega$, one can always find a strategy with degenerated eigenvalue which is not worse than $\Omega$. Therefore, for a given $N$ and $\delta$, and among all the Bell-supported strategies $\Omega$, by applying the Lemma \ref{Lm:Bellsupport}, one can see that the optimal strategy can always be achieved by homogeneous strategy. The proof of Lemma \ref{Lm:Bellsupport} is in Appendix \ref{Sec:proofs3}.

For the homogeneous strategy $\Omega$, according to Observation \ref{Ob:adBelldiag}, one can directly calculate $F_S(N,\delta,\Omega)$ and $N_S(\epsilon,\delta,\Omega)$. In the high precision limit, i.e., $\epsilon,\delta\to 0$, the optimal homogeneous strategy to verify $U$ is
\begin{equation}
\Omega = \Phi_\mc{U} + \frac{1}{e} (1 - \Phi_\mc{U}).
\end{equation}

To realize this, based on the optimal CB-test strategy introduced in Section \ref{Sec:framework}, one may add some ``trivial test'' into it. In the ``trivial test'', Alice and Bob perform no operation to realize the identity test. To realize the optimal homogeneous test in the high precision limit, one may perform the trivial test with probability $p=\frac{d+1-e}{ed}$ and original optimal CB-test with probability $1-p$. In this case, the required number of trials is \cite{zhu2019optimal}
\begin{equation}
N(\epsilon,\delta,\lambda) = N_S(\epsilon,\delta,\lambda) \approx e\epsilon^{-1}\ln\delta^{-1}.
\end{equation}

\section{Verification of some typical quantum gates} \label{Sec:typical}
In the previous section, we introduce the general framework of the quantum gate verification. Especially, we show that any unitary channel $\mc{U}$ on $\mc{H}_d$ can be efficiently verified with pure state inputs and projective measurements, in both non-adversarial and adversarial scenario. In this section, we apply such verification protocol to several typical quantum gates involved in quantum computing, such as any single qubit gates, multi-qubit Clifford gates and beyond. Hereafter, we focus on the non-adversarial scenario.

\subsection{Single qubit gates}\label{SubSec:single}
We first study the qubit identity channel $\mc{I}$, and latter directly extend it to any single qubit gate $\mc{U}$ by some unitary transformation. The Choi state of $\mc{I}$ is $\Phi_+$. According to Theorem \ref{Th:optimal}, we can utilize 3 MUBs from the Pauli bases,
\begin{equation}
\begin{aligned}
P(X)&=\frac{X\otimes X + \mathbb{I}}{2}=\ket{++}\bra{++}+\ket{--}\bra{--},\\
P(Y)&=\frac{-Y\otimes Y+\mathbb{I}}{2}=\ket{+i-i}\bra{+i-i}+\ket{-i+i}\bra{-i+i},\\
P(Z)&=\frac{Z\otimes Z+\mathbb{I}}{2}=\ket{00}\bra{00}+\ket{11}\bra{11},
\end{aligned}
\end{equation}
which account for three subspaces, and $\ket{\pm i}$ denote the eigenstates of the $Y$ basis. Note that these three projectors can also be derived from the stabilizer of the Choi state, which is helpful for the derivation of multi-qubit gates. The verification operator is $\Omega=\frac1{3}(P_X+P_Y+P_Z)$ \cite{sam2018optimal}. By Theorem \ref{Th:optimal}, the qubit gate can be verified with optimal trial number $N^{op}(\epsilon, \delta)=\left\lceil \frac{\ln \delta^{-1}}{\ln [1 - \frac{2}{3}\epsilon]^{-1}}\right\rceil \left\lceil \frac{3}{2\epsilon}\ln \delta^{-1}\right\rceil$.

The corresponding verification strategy $W$ for the identity qubit channel $\mc{I}$ is to choose the following verification pairs $(\rho_l, E_l$)
\begin{equation}
\begin{aligned}
&(\ket{+}, \ket{+}),\ (\ket{-}, \ket{-}),\\
&(\ket{+i}, \ket{+i}),\ (\ket{-i}, \ket{-i}),\\
&(\ket{0}, \ket{0}),\ (\ket{1}, \ket{1}).
\end{aligned}
\end{equation}
with equal probability $1/6$. For example, $(\ket{+}, \ket{+})$ means that one inputs the $\ket{+}$ and perform measurement using PVM $\{\ket{+}\bra{+}, \mathbb{I} - \ket{+}\bra{+}\}$. If the mesurement result is $\ket{+}\bra{+}$, the test passes. For any single qubit gate $\mc{U}$, verification pairs should be updated to ($\rho_l, \mc{U}(E_l)$). For example, for the $Z$ gate the verification strategy is to choose
\begin{equation}
\begin{aligned}
&(\ket{+}, \ket{-}),\ (\ket{-}, \ket{+}),\\
&(\ket{+i}, \ket{-i}),\ (\ket{-i}, \ket{+i}),\\
&(\ket{0}, \ket{0}),\ (\ket{1}, \ket{1}).
\end{aligned}
\end{equation}
with equal probability $1/6$. In the same way, the non-Clifford $T$ gate can also be verified. In addition, general qudit gates can be verified according to Sec.~\ref{SSec:nonOp}.

\subsection{Clifford gates}\label{SubSec:Clifford}
In this and the next section, we consider the multi-qubit gates, where the underlying Hilbert space is $\mc{H}_d=\mc{H}_2^{\otimes N}$. In this case, it is not easy to implement the optimal strategy given in Sec.~\ref{SSec:nonOp}, since the input states and the measurements could be entangled ones. Thus, in the following we show how to verify the Clifford and $C^{n-1}Z(x)$ gates locally, inspired by the verification of stabilizer(-like) states.

\begin{figure}[htbp]
\includegraphics[width=3.8cm]{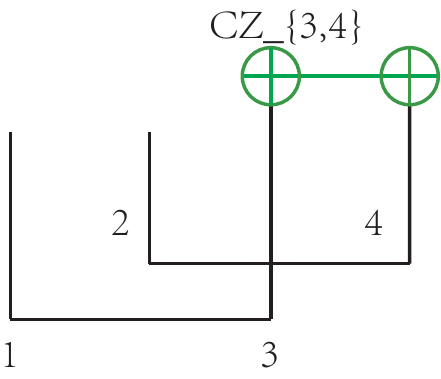}
\caption{The Choi state: the CZ gate operates on the Bell pairs. The green (horizontal) line labels the CZ gate, and the black $U$-type line labels the Bell pair.} \label{fig:CZ}
\end{figure}

Let us first take the Controlled-Z (CZ) gate as an example, and the overall Choi state shows,
\begin{equation}
\begin{aligned}
\ket{\Phi_{CZ}}=\frac1{2}CZ_{3,4}(\ket{00}+\ket{11})_{1,3}\otimes (\ket{00}+\ket{11})_{2,4}.
\end{aligned}
\end{equation}
Note that CZ gate operates on the final two qubits. The stabilizer generators of the initial Bell states are,
\begin{equation}
\begin{aligned}
g_1=X_1X_3,\ g_2=Z_1Z_3,\ g_3=X_2X_4,\ g_4=Z_2Z_4,
\end{aligned}
\end{equation}
and the generators of the state $\ket{\Phi_{CZ}}$ is updated to $g_i'=\mc{U}(g_i)$, where $U$ is the corresponding gate (CZ here).
\begin{equation}
\begin{aligned}
g'_1=X_1X_3Z_4,\ g'_2=Z_1Z_3,\ g'_3=X_2Z_3X_4,\ g'_4=Z_2Z_4,
\end{aligned}
\end{equation}
on account the commuting relations,
\begin{equation}\label{Eq:commCl}
\begin{aligned}
&CZ_{i,j}X_{i(j)}CZ_{i,j}=X_iZ_j(Z_iX_j),\\
&CZ_{i,j}Z_{i(j)}CZ_{i,j}=Z_{i(j)}.
\end{aligned}
\end{equation}

To verify the Choi state, we can use the four stabilizer generators $g_i'$ to construct the projection $P_i=\frac{g_i'+\mathbb{I}}{2}$, and the verification operator is $\Omega = \frac{1}{2n}\sum_i P_i$ (here $n=2$) with the gap being $\nu(\Omega)=1/2n$. In fact, one can utilize all the non-trivial $2^{2n}-1$ stabilizers to enhance the gap to $2^{2n-1}/(2^{2n}-1)$ \cite{sam2018optimal,zhu2019general}, but may cost more state preparation and measurement settings. In some cases, the measurement settings can be reduced by the coloring of the corresponding graph states \cite{Toth2005Detecting,Zhu2018hypergraph,Zhou2019structure}, which is equivalent to the stabilizer states under local Clifford gates \cite{Hein2006Graph}. For example, for the $n$-qubit Clifford circuit with only CZ gates between each two neighbouring qubits, the corresponding Choi state is a $2$-color graph state. Therefore, with only two state preparation and measurement settings, one can verify such Clifford circuit efficiently.

Now we translate the strategy expressed by verification operator $\Omega$ to the realization with verification pairs $(\rho_l, E_l)$. For the projector $P_i$, the corresponding subspace is the $+1$ subspace of $g_i'=A_i\otimes B_i$, where $A_i, B_i$ are two Pauli tensor operators. Thus the verification strategy  $(\rho_l, E_l)$ is to input the eigenstate $\ket{\psi_A}$ in the $+1(-1)$ subspace and project the eigenstate to the  $+1(-1)$ subspace of $B_i$. Since $A_i, B_i$ are Pauli operators, the verification can be accomplished with inputting product pure states in the Pauli basis and Pauli measurements. For instance, the verification pairs of projector $P_1$ and $P_2$ are,
\begin{equation}
\begin{aligned}
&\{\ket{+}_{1},\ (X_3Z_4)^{+}\},\ \{\ket{-}_{1},\ (X_3Z_4)^{-}\},\\
&\{\ket{0}_{1},\ Z_3^{+}\},\ \{\ket{1}_{1},\ Z_3^{-}\},\\
\end{aligned}
\end{equation}
similarly for $P_2$ to $P_4$. To be specific, here $\{\ket{+}_{1},\ (X_3Z_4)^{+}\}$ means that one inputs $\ket{+}$ on the first qubit ($\mathbb{I}/2$ on the second qubit),
and measure the result in the $+1$ basis of $X_3Z_4$. It is clear that $(X_3Z_4)^{\pm}$ can be finished by local $X_3$ and $Z_4$ measurements and classical post-processings.

The above analysis can be generalized to the verification of any Clifford gates, and we summarize this in the following observation.
\begin{observation}\label{Ob:Clifford}
Any n-qubit Clifford gate can be verified under entanglement infidelity $\epsilon$ and significance level $\delta$ with verification trial number upper bounded by,
\begin{equation}\label{}
\begin{aligned}
N\leq \left\lceil\frac{2n}{\epsilon}\ln\delta^{-1}\right\rceil
\end{aligned}
\end{equation}
and this bound can be further reduced with more input states and measurement settings,
\begin{equation}\label{Eq:clif}
\begin{aligned}
N\leq \left\lceil\frac{2^{2n}-1}{2^{2n-1}}\epsilon^{-1} \ln\delta^{-1}\right\rceil
\end{aligned}
\end{equation}
where the input states are in the Pauli basis and the measurements are local Pauli ones.
\end{observation}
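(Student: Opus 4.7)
The plan is to exploit the fact that, for any $n$-qubit Clifford gate $\mathcal{U}$, the Choi state $\Phi_{\mathcal{U}}$ is itself a stabilizer state, and then invoke the stabilizer-state verification machinery already alluded to in the Sec.~\ref{SubSec:Clifford} example. Concretely, first I would write down the $2n$ stabilizer generators of the initial $n$ Bell pairs between system $A$ and its copy $\bar{A}$, namely $\{X_i X_{\bar i},\, Z_i Z_{\bar i}\}_{i=1}^n$. Applying $U$ on the $\bar{A}$ side and using the Clifford conjugation rule $U P U^\dagger \in \pm\mathcal{P}_n$ for every Pauli $P$, one obtains updated generators $g_i' = A_i \otimes B_i$ with $A_i, B_i$ Pauli tensor operators, exactly as in the CZ example around Eq.~\eqref{Eq:commCl}. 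The Choi state $\Phi_{\mathcal{U}}$ is then the unique common $+1$ eigenstate of $\{g_i'\}_{i=1}^{2n}$.

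Next I would build the verification operator $\Omega = \tfrac{1}{2n}\sum_{i=1}^{2n} P_i$ with $P_i = (g_i' + \mathbb{I})/2$ and compute its spectral gap. The standard argument for stabilizer verification (already used in Refs.~\cite{sam2018optimal,zhu2019general}) is that in the simultaneous eigenbasis of the commuting $g_i'$'s, every non-Choi eigenstate is a $-1$ eigenstate of at least one $g_i'$, so the second-largest eigenvalue of $\Omega$ is $1 - 1/(2n)$, giving $\nu(\Omega) = 1/(2n)$. Plugging this into the general bound $N \le \lceil [\nu(\Omega)\epsilon]^{-1}\ln\delta^{-1}\rceil$ from Eq.~\eqref{Eq:upTrial} yields the first claim $N \le \lceil 2n\,\epsilon^{-1}\ln\delta^{-1}\rceil$. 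For the tighter bound~\eqref{Eq:clif}, I would instead average uniformly over all $2^{2n}-1$ non-identity stabilizers $\{S_\alpha\}$; since $\tfrac{1}{2^{2n}}\sum_{\alpha}(\mathbb{I}+S_\alpha) = 2^{2n}\,\Phi_{\mathcal{U}}$, averaging the projectors $(\mathbb{I}+S_\alpha)/2$ over $\alpha \neq 0$ gives an operator with spectrum $\{1,\,(2^{2n-1}-1)/(2^{2n}-1)\}$, so $\nu(\Omega) = 2^{2n-1}/(2^{2n}-1)$, and Eq.~\eqref{Eq:upTrial} delivers Eq.~\eqref{Eq:clif}.

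It then remains to realize each projector $P_i$ by a prepare-and-measure pair with a Pauli-basis product input and local Pauli measurements, exhibiting an explicit strategy $W = \{p_l,(\rho_l, E_l)\}$ satisfying Eq.~\eqref{eq:rholEl1}. Writing $g_i' = A_i \otimes B_i$, the $+1$ eigenspace of $g_i'$ decomposes as a direct sum of $(+,+)$ and $(-,-)$ eigenspaces of $A_i$ and $B_i$; via the state-channel duality Eq.~\eqref{Eq:dualityAll} this translates to: with equal probability input a Pauli product eigenstate of $A_i$ with eigenvalue $s\in\{\pm 1\}$ on system $A$, and project the channel output onto the $s$-eigenspace of $B_i$, which since $B_i$ is a tensor of single-qubit Paulis is carried out by local Pauli measurements followed by classical parity post-processing, exactly as in the CZ worked example. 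Doing this for every $g_i'$ (and, for the tighter bound, for every non-trivial stabilizer) produces an explicit local strategy with the claimed trial-number scaling.

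The main obstacle I anticipate is not the existence proof but the spectral-gap computations when using all $2^{2n}-1$ stabilizers: one must carefully exclude the identity term and check the degeneracy structure on the $(2^{2n}-1)$-dimensional orthogonal complement of $\Phi_{\mathcal{U}}$ to get $\beta(\Omega)=(2^{2n-1}-1)/(2^{2n}-1)$ rather than a weaker bound. A secondary subtlety is verifying that the pair $(\rho_l, E_l)$ defined by Pauli eigenstates genuinely satisfies $\tr[\mathcal{U}(\rho_l)E_l]=1$; this follows from $g_i'\Phi_{\mathcal{U}}=\Phi_{\mathcal{U}}$ together with Eq.~\eqref{Eq:dualityAll}, but should be written out once to confirm the post-processing sign convention is consistent.
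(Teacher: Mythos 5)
Your proposal follows essentially the same route as the paper: update the Bell-pair stabilizer generators under Clifford conjugation to Pauli tensor products $g_i'=A_i\otimes B_i$, form $\Omega=\frac{1}{2n}\sum_i (g_i'+\mathbb{I})/2$ with spectral gap $1/(2n)$ (or average over all $2^{2n}-1$ nontrivial stabilizers to get gap $2^{2n-1}/(2^{2n}-1)$), insert this into Eq.~\eqref{Eq:upTrial}, and translate each projector into a Pauli-product input paired with a local Pauli measurement plus classical parity post-processing, exactly as in the paper's CZ worked example. The only blemish is the intermediate identity $\frac{1}{2^{2n}}\sum_\alpha(\mathbb{I}+S_\alpha)=2^{2n}\Phi_{\mc{U}}$, which should read $\frac{1}{2^{2n}}\sum_\alpha S_\alpha=\Phi_{\mc{U}}$; the spectrum $\{1,(2^{2n-1}-1)/(2^{2n}-1)\}$ and gap you then state are nonetheless correct.
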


\subsection{Multi-qubit Control-Z and Control-X gates}
In this section, we show the verification protocol of the $C^{n-1}Z$ and $C^{n-1}Z$, i.e.,
\begin{equation}
\begin{aligned}
C^{n-1}Z=\mathbb{I}-2\ket{00\cdots0}\bra{00\cdots0}.
\end{aligned}
\end{equation}
and $C^{n-1}X=H_nC^{n-1}XH_n$.

\begin{figure*}[htbp]
\includegraphics[width=13cm]{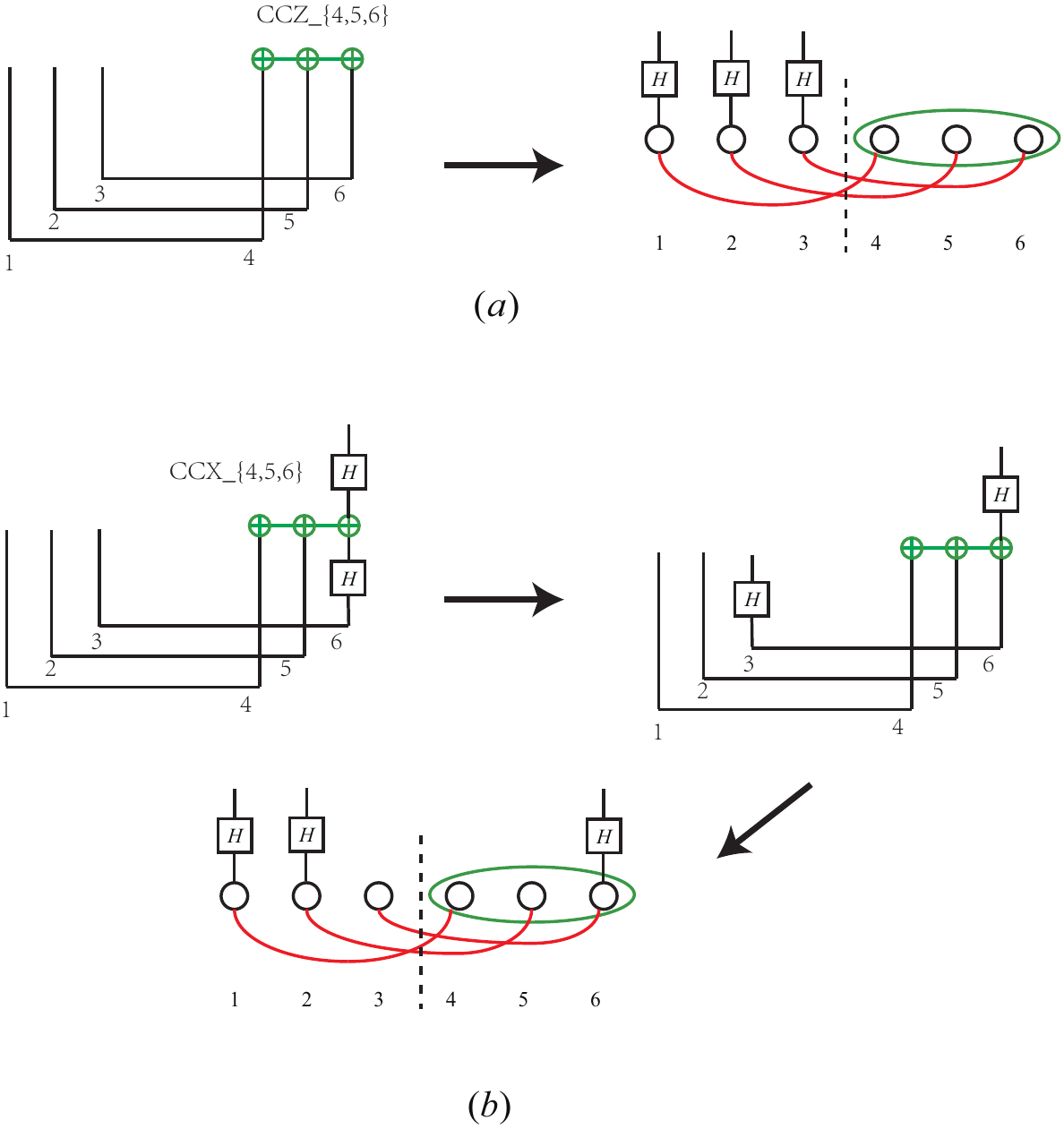}
\caption{The Choi state: the CCZ gate operates on the Bell pairs. The green (horizontal) line labels the CCZ gate, and the black $U$-type line labels the Bell pair. Here we transform the Choi states of $C^{n-1}Z$ and $C^{n-1}X$ to the hypergraph states $\ket{HG}$ in (a) and (b) respectively. Here the hypergraph state $\ket{HG}$ owns three (red) normal edge and one (green) $n=3$-hyperedge, and the graph is $n+1=4$ colorable.}\label{fig:CCZ}
\end{figure*}

Similar as Sec.~\ref{SubSec:Clifford}, we can find the updated ``stabilizer" generators, however now the stabilizers are not in the Pauli tensor form since the $C^{n-1}Z(x)$ gate is not a Clifford one.
Since the $C^{n-1}Z$ gate can generate the hypergraph state \cite{Rossi2013hyper}, in the following we adopt the verification operator for the hypergraph state \cite{Zhu2018hypergraph}. As shown in Fig.~\ref{fig:CCZ}, the Choi state of the $C^{n-1}Z(X)$ state is equivalent to the hypergraph state under local unitary, i.e, the single-qubit Hadamard gate.

\begin{equation}
\begin{aligned}
&\ket{\Phi_{C^{n-1}Z}}=\bigotimes_{i=1}^n H_i \ket{HG},\\
&\ket{\Phi_{C^{n-1}X}}=\bigotimes_{i=1}^{n-1}H_i \otimes H_{2n} \ket{HG},\\
\end{aligned}
\end{equation}
where the hypergraph state $\ket{HG}=\bigotimes_{i=1}^n CZ_{\{i,i+n\}}C^{n-1}Z\ket{+}^{\otimes N}$.

In this way, we can directly get the stabilizer generators of the Choi states from the ones of the hypergraph state. For example, for the $\ket{\Phi_{C^{n-1}Z}}$, the generator related to the 4-th qubit is $g_4=X_1X_4CZ_{5,6}$. It is not hard to see that the graph is $n+1$ colorable in Fig.~\ref{fig:CCZ}. Thus one can verify $\ket{\Phi_{C^{n-1}Z}}$ and $\ket{\Phi_{C^{n-1}X}}$ using verification operator constructed from the stabilizers with the spectral gap is $1/(n+1)$ according to the color protocol in \cite{Zhu2018hypergraph}. In a similar way as in Sec.~\ref{SubSec:Clifford}, we can transfer the state protocol to the verification strategy of the unitary gates, and the verification trial number is $N\leq \left\lceil\frac{n+1}{\epsilon}\ln \delta^{-1}\right\rceil$. Note that we can still use local state inputs and Pauli basis measurements, since the $C^{k}Z$ operator on k-qubit can be measured with the $Z$ basis measurement $Z^{\otimes k}$ using post-processing.

\section{Applications in channel property testing} \label{Sec:property}
In this section, we show the application of the verification protocol to the efficient test on the property of the underlying quantum channel. Here we focus on channels' entanglement property. We believe that the following analysis could be easily generalized to other properties, such as the coherence generating power \cite{Mani2015Cohering,Diaz2018usingreusing}.

The entanglement property refers to whether the underlying channel is an entanglement-preserving (EP) or the entanglement-breaking (EB) one. This kind of test is essential for quantum communications, such as the quantum memory and the quantum channel in quantum networks and distributed quantum computing. An EB channel can always be described by a measurement-and-preparation channel, thus destroys any quantum correlation between the initial input state and other parties. In the following sections, we first discuss the verification of the entanglement property of the channel and further quantifies this kind of quantumness by estimating a lower bound of the (generalized) robustness of the quantum memory \cite{liu2019resource,yuan2019robustness}.

\subsection{Entanglement property detection}
As a specific type of quantum channel, a good quantum memory can preserve the quantum information to some extent. In the ideal case, the quantum memory keeps all the information contained in the states and is reversible. The perfect memory is a known unitary $\mc{U}$, e.g., the identity channel $\mc{I}$. In the following discussion, we show that the verification protocol can help us reveal whether the noisy channel is EP. Without loss of generality, here we focus on the strategies to verify $\mc{I}$.

It is known that a channel is EP iff the corresponding Choi state is an entangled state. The Choi state $\Phi_{\mc{E}}$ is entangled if the fidelity to the maximally entangled state $\mathrm{Tr}(\Phi_{\mc{E}}\Phi^+)\geq 1/d$, that is, by the violation of the following witness,
\begin{equation}\label{Eq:witness}
\begin{aligned}
\mc{W}:=\frac{\mathbb{I}}{d}-\Phi^+,
\end{aligned}
\end{equation}
where the expectation value $\langle\mc{W}\rangle\geq 0$ for all separable states \cite{GUHNE2009detection}.

As a result, here the error threshold is taken as $\epsilon=1-1/d$. From Theorem \ref{Th:optimal}, we know that the optimal verification round is,
\begin{equation}\label{Eq:Ent}
\begin{aligned}
N^{op}= \left\lceil\frac{\ln \delta^{-1}}{\ln\left[1-\nu(\Omega_{op})\epsilon\right]^{-1}}\right\rceil = \left\lceil\frac{\ln \delta^{-1}}{\ln\left(\frac{d+1}{2}\right)}\right\rceil.
\end{aligned}
\end{equation}
Thus EP property can be verified with single round if $d\geq 2\delta^{-1}-1$. Moreover, suppose we consider the verification protocol just with two MUBs, which is the easiest to realize in the experiment, the corresponding verification round is,
\begin{equation}\label{Eq:2MUB}
\begin{aligned}
N^{2-MUB}\leq \left\lceil\frac{\ln\delta^{-1}}{\ln \left(1-\frac{d-1}{2d}\right)^{-1}}\right\rceil=\left\lceil\frac{\ln \delta^{-1}}{\ln\left (\frac{2d}{d+1}\right)}\right\rceil.
\end{aligned}
\end{equation}
Thus we can use two measurement settings, for example, $X$ and $Z$ bases states and measurements to detect the entanglement property of the quantum channel.

\subsection{Quantumness quantification}
In this section, we further apply the verification to the quantification of quantumness. Specifically, an operational measure called the robustness of the quantum memory can be lower bounded using the verification protocol.

We first introduce the robustness of entanglement \cite{vidal1999robustness},
\begin{equation}
\begin{aligned}
\mc{R}^s(\rho):=\min_{\mc{\sigma}\in\mc{S}}\left\{t\geq0,\ \frac{\rho+t\sigma}{1+t}\in\mc{S}\right\},
\end{aligned}
\end{equation}
where $\mc{S}$ is the set of separable states. $\mc{R}(\rho)$ quantifies how much separable noise needs to be introduced to make the state separable. If one allows the noisy state $\sigma$ to be any state, the definition becomes the generalized robustness $\mc{R}_G^s(\rho)$. By definition, $\mc{R}_G^s(\rho)\leq\mc{R}^s(\rho)$.

In a similar way, the robustness of quantum channel is defined as \cite{liu2019resource,yuan2019robustness},
\begin{equation}\label{Eq:robCh}
\begin{aligned}
\mc{R}(\mc{E}):=\min_{\mc{M}\in\mc{F}}\left\{t\geq0,\ \frac{\mc{E}+t\mc{M}}{1+t}\in\mc{F}\right\},
\end{aligned}
\end{equation}
where $\mc{F}$ is the set of EB channels. If one allows the mixed channel $\mc{M}$ to be any channel, the definition becomes the generalized robustness $\mc{R}_G(\mc{E})$. By definition, $\mc{R}_G(\mc{E})\leq\mc{R}(\mc{E})$. The (generalized) robustness measure of quantum channel owns a few of significant operational meaning, such as the amount of classical simulation cost and the advantage in state discrimination-based quantum games.

\begin{observation}\label{Ob:robust}
\begin{equation}\label{}
\begin{aligned}
\mc{R}^s(\Phi_{\mc{E}}) \leq \mc{R}(\mc{E}),\ \mc{R}_G^s(\Phi_{\mc{E}}) \leq \mc{R}_G(\mc{E}),
\end{aligned}
\end{equation}
where $\Phi_{\mc{E}}$ is the Choi state of the quantum channel $\mc{E}$.
\end{observation}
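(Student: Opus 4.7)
The plan is to exploit the linearity of the Choi map together with the Horodecki characterization of entanglement-breaking channels, namely that $\mc{E}$ is EB iff its Choi state $\Phi_{\mc{E}}$ is separable. Under this correspondence, a convex decomposition of a channel into EB (resp.\ arbitrary) channels will descend to a convex decomposition of its Choi state into separable (resp.\ arbitrary) states, and the robustness value will be preserved as an upper bound.

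For the first inequality, I would start by letting $t^\star = \mc{R}(\mc{E})$ and choosing an optimal EB channel $\mc{M}^\star \in \mc{F}$ so that the mixture $\mc{E}' := (\mc{E}+t^\star\mc{M}^\star)/(1+t^\star)$ is itself EB. Applying the Choi map to both sides, linearity gives
\begin{equation}
\Phi_{\mc{E}'} = \frac{\Phi_{\mc{E}} + t^\star \Phi_{\mc{M}^\star}}{1+t^\star}.
\end{equation}
Since $\mc{M}^\star$ and $\mc{E}'$ are EB, the Horodecki theorem tells us that $\Phi_{\mc{M}^\star}$ and $\Phi_{\mc{E}'}$ are separable. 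Thus $\Phi_{\mc{M}^\star}$ is an admissible noise in the definition of $\mc{R}^s(\Phi_{\mc{E}})$ and witnesses the bound $\mc{R}^s(\Phi_{\mc{E}}) \leq t^\star = \mc{R}(\mc{E})$.

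For the second inequality, the argument is essentially the same but with the separability constraint on the noise channel dropped. Let $t^\star_G = \mc{R}_G(\mc{E})$ with optimizer $\mc{M}^\star_G$ an arbitrary CPTP map. Then $\Phi_{\mc{M}^\star_G}$ is a legitimate quantum state, and the mixture channel $(\mc{E}+t^\star_G\mc{M}^\star_G)/(1+t^\star_G)$ being EB forces its Choi state, namely $(\Phi_{\mc{E}}+t^\star_G\Phi_{\mc{M}^\star_G})/(1+t^\star_G)$, to be separable. Hence $\Phi_{\mc{M}^\star_G}$ is a feasible choice in the generalized state robustness, giving $\mc{R}^s_G(\Phi_{\mc{E}}) \leq t^\star_G = \mc{R}_G(\mc{E})$.

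There is no significant obstacle here; the proof is really just a bookkeeping exercise in the state-channel duality, so the only point that requires mild care is to verify that the Choi map preserves convex combinations with the same weights (it does, being linear and trace-rescaling in a uniform way), so that the normalization factor $1/(1+t)$ on the channel side matches that on the state side. The content of the observation is that the inequalities may be strict in general, because not every separable Choi state of the form $(\Phi_{\mc{E}}+t\sigma)/(1+t)$ arises from an EB channel $\mc{M}$ (the noise state $\sigma$ need not satisfy the partial-trace constraint $\tr_B[\sigma]=\mathbb{I}_A/d$), so the feasible set for the state robustness is strictly larger.
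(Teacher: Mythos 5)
Your proposal is correct and follows essentially the same route as the paper: both arguments rest on the fact that the Choi map sends every feasible pair $(t,\mc{M})$ of the channel robustness to a feasible pair $(t,\Phi_{\mc{M}})$ of the state robustness, because EB channels have separable Choi states while the state-side minimization drops the extra partial-trace constraint $\tr_B[\sigma]=\mathbb{I}_A/d$. The paper states this as a feasible-set inclusion in one line; you make it explicit by pushing the optimizer through the (linear) Choi map, which is the same content.
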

\begin{proof}
Here we prove the first inequality, and the second can be proved in the same way. If we write Eq.~\eqref{Eq:robCh} in the Choi state form, we can see that the noisy Choi state $\Phi_{\mc{M}}$ is not only a separable state but also under the additional constraint---maximally mixed on the first subsystem. However, the minimization of $\mc{R}^s(\Phi_{\mc{E}})$ does not need this constraint thus serves as a lower bound.
\end{proof}
From Observation \ref{Ob:robust}, one has $\mc{R}_G^s(\Phi_{\mc{E}}) \leq \mc{R}_G(\mc{E})\leq \mc{R}(\mc{E})$. Thus we can give a reliable lower bound of the robustness of quantum channel by estimating the corresponding measure on the Choi state. From Ref.~\cite{eisert2007quantitative}, the generalized robustness of entanglement on states can be lower bounded by the witness as,
\begin{equation}
\begin{aligned}
R^s_G(\rho)\geq \frac{|\tr(\mc{W}\rho)|}{\lambda_{max}},
\end{aligned}
\end{equation}
where $\lambda_{max}$ is the largest eigenvalue of the witness operator $\mc{W}$. Inserting the witness in Eq.~\eqref{Eq:witness}, we have
\begin{equation}
\begin{aligned}
R^s_G(\Phi_{\mc{E}})\geq d\tr(\Phi_{\mc{E}}\Phi_+)-1.
\end{aligned}
\end{equation}
As a result, to confirm $\mc{R}(\mc{E})\geq r$, the entanglement infidelity should satisfies $\epsilon\leq \frac{d-r-1}{d}$. And we have the trial number of the optimal strategy given by
\begin{equation}\label{Eq:robNum}
\begin{aligned}
N^{op}= \left\lceil\frac{\ln \delta^{-1}}{\ln\left[1-\nu(\Omega_{op})\epsilon\right]^{-1}}\right\rceil = \left\lceil\frac{\ln \delta^{-1}}{\ln\left(\frac{d+1}{r+2}\right)}\right\rceil.
\end{aligned}
\end{equation}
Note as $r=0$, Eq.~\eqref{Eq:robNum} becomes the result in Eq.~\eqref{Eq:Ent} of the verification of entanglement. We can further reduce the measurement efforts by using less MUBs.

\section{Conclusion and outlook} \label{Sec:conclusion}

In this work, we studied the verification of quantum gates. Based on the Choi representation of quantum channels, we analyze the verification strategies with local state inputs and local measurements without the assistance of extra ancillaries.

In the non-adversarial scenario, the verification performance characterized by the type-II error probability $P(\epsilon,\Omega)$ can be calculated by a semidefinite program. On accunt of the unitary invariance and convexity of the pass probability with respect to $\Omega$, one can prove the optimality of a uniformly mixing strategy $\Omega_{op}$ in Eq.~\eqref{Eq:OpOp}, which can be realized by a CB test with $(d+1)$-MUB when $d$ is a prime power, or other mixing strategy based on quantum state $2$-design. Moreover, we show that the performance of all the Bell-diagonal strategies can be exactly evaluated.

In the adversarial scenario, the verification performance characterized by the entanglement fidelity lower bound $F(N,\delta,\Omega)$ and number of trials upper bound $N(\epsilon,\delta,\Omega)$ are in general hard to solve, while the corresponding state parameters $F_S(N,\delta,\Omega)$ and $N_S(\epsilon,\delta,\Omega)$ can provide a useful bound. We prove that, for the Bell-diagonal strategies with the form in Eq.~\eqref{Eq:adbelldiagonal}, the calculation of $F(N,\delta,\Omega)$ and $N(\epsilon,\delta,\Omega)$ can be reduced to its corresponding state version $F_S(N,\delta,\Omega)$ and $N_S(\epsilon,\delta,\Omega)$. Meanwhile, we prove that, among all the Bell-supported strategies $\Omega$ defined in Section \ref{SSec:adversarial}, for given trial rounds $N$ and significant level $\delta$, the optimal $F(N,\delta,\Omega)$ can always be achieved by the homogeneous strategies.

More specifically, we analyze the local verification strategies and their performance for some common quantum gates, such as single-qubit and qudit gates, multi-qubit Clifford gates, and multi-qubit controlled-$Z$ and controlled-$X$ gates. We also demonstrate the application of gate verification for channels' property testing. We show that gate verification can be used to test the entanglement-preserving property and further the quantification of the robustness of quantum memory.

To enhance the robustness of our work against state preparation and measurement error, we may consider the combination of channel verification with common robust methods, such as randomized benchmarking \cite{emerson2005scalable,magesan2011scalable}, robust tomographic information extraction \cite{kimmel2014robust}, and gate set tomography \cite{blumekohout2013robust}. On the other hand, it is important to make the gate verification protocol robust against few rounds of failure tests \cite{takeuchi2019resource}.

To characterize the quantumness in a channel is currently a hot topic \cite{chitamber2019quantum,Yuan2019hypothesis,gour2018entropy,gour2019comparison,liu2019operational,liu2019resource,yuan2019robustness}. Here we analyze the application of gate verification to quantify the robustness of quantum memory \cite{liu2019resource,yuan2019robustness}, we believe that our method can be extended to quantify other properties of the channel, such as the coherence generating power \cite{Mani2015Cohering,Diaz2018usingreusing}, magic \cite{veitch2014resource} and so on.

During the preparation of this manuscript, we notice two recent related works \cite{liu2019efficient,zhu2019efficientgate}. Comparing to Ref.~\cite{liu2019efficient}, we analytically derive the optimal verification strategy for the general $d$-level unitary. Ref.~\cite{zhu2019efficientgate} develops a very general framework for the quantum gate verification with local state inputs and local measurements, which is suitable for quite a few gates, especially for the multi-partite ones. Here we focus on the preparation and measurement strategies and directly relate them to channel's Choi representation. As a result, our performance (by the  number of trials) on the multi-qubit Clifford gate in Eq.~\eqref{Eq:clif}, $N\leq \left\lceil\frac{2^{2n}-1}{2^{2n-1}}\epsilon^{-1}\ln\delta^{-1}\right\rceil$ is better than the one in Ref.~\cite{zhu2019efficientgate}, $N\leq \left\lceil 3\epsilon^{-1} \ln\delta^{-1}\right\rceil$. Moreover, here we also consider the quantum gate verification in adversarial scenario and its application in channel property testing.

\begin{acknowledgments}
We thanks Xiongfeng Ma and Xiao Yuan for the helpful discussion on the quantumness of channels and the semidefinite programming. This work was supported by the National Natural Science Foundation of China Grants No.~11875173 and No.~11674193, the National Key R\&D Program of China Grants No.~2017YFA0303900 and No.~2017YFA0304004, and the Zhongguancun Haihua Institute for Frontier Information Technology. You Zhou was supported in part by the Templeton Religion Trust under grant TRT 0159.
\end{acknowledgments}


\begin{thebibliography}{55}%
\makeatletter
\providecommand \@ifxundefined [1]{%
 \@ifx{#1\undefined}
}%
\providecommand \@ifnum [1]{%
 \ifnum #1\expandafter \@firstoftwo
 \else \expandafter \@secondoftwo
 \fi
}%
\providecommand \@ifx [1]{%
 \ifx #1\expandafter \@firstoftwo
 \else \expandafter \@secondoftwo
 \fi
}%
\providecommand \natexlab [1]{#1}%
\providecommand \enquote  [1]{``#1''}%
\providecommand \bibnamefont  [1]{#1}%
\providecommand \bibfnamefont [1]{#1}%
\providecommand \citenamefont [1]{#1}%
\providecommand \href@noop [0]{\@secondoftwo}%
\providecommand \href [0]{\begingroup \@sanitize@url \@href}%
\providecommand \@href[1]{\@@startlink{#1}\@@href}%
\providecommand \@@href[1]{\endgroup#1\@@endlink}%
\providecommand \@sanitize@url [0]{\catcode `\\12\catcode `\$12\catcode
  `\&12\catcode `\#12\catcode `\^12\catcode `\_12\catcode `\%12\relax}%
\providecommand \@@startlink[1]{}%
\providecommand \@@endlink[0]{}%
\providecommand \url  [0]{\begingroup\@sanitize@url \@url }%
\providecommand \@url [1]{\endgroup\@href {#1}{\urlprefix }}%
\providecommand \urlprefix  [0]{URL }%
\providecommand \Eprint [0]{\href }%
\providecommand \doibase [0]{http://dx.doi.org/}%
\providecommand \selectlanguage [0]{\@gobble}%
\providecommand \bibinfo  [0]{\@secondoftwo}%
\providecommand \bibfield  [0]{\@secondoftwo}%
\providecommand \translation [1]{[#1]}%
\providecommand \BibitemOpen [0]{}%
\providecommand \bibitemStop [0]{}%
\providecommand \bibitemNoStop [0]{.\EOS\space}%
\providecommand \EOS [0]{\spacefactor3000\relax}%
\providecommand \BibitemShut  [1]{\csname bibitem#1\endcsname}%
\let\auto@bib@innerbib\@empty
\bibitem [{\citenamefont {Eisert}\ \emph {et~al.}(2019)\citenamefont {Eisert},
  \citenamefont {Hangleiter}, \citenamefont {Walk}, \citenamefont {Roth},
  \citenamefont {Markham}, \citenamefont {Parekh}, \citenamefont {Chabaud},\
  and\ \citenamefont {Kashefi}}]{eisert2019quantum}%
  \BibitemOpen
  \bibfield  {author} {\bibinfo {author} {\bibfnamefont {J.}~\bibnamefont
  {Eisert}}, \bibinfo {author} {\bibfnamefont {D.}~\bibnamefont {Hangleiter}},
  \bibinfo {author} {\bibfnamefont {N.}~\bibnamefont {Walk}}, \bibinfo {author}
  {\bibfnamefont {I.}~\bibnamefont {Roth}}, \bibinfo {author} {\bibfnamefont
  {D.}~\bibnamefont {Markham}}, \bibinfo {author} {\bibfnamefont
  {R.}~\bibnamefont {Parekh}}, \bibinfo {author} {\bibfnamefont
  {U.}~\bibnamefont {Chabaud}}, \ and\ \bibinfo {author} {\bibfnamefont
  {E.}~\bibnamefont {Kashefi}},\ }\href {https://arxiv.org/abs/1910.06343}
  {\enquote {\bibinfo {title} {Quantum certification and benchmarking},}\ }
  (\bibinfo {year} {2019}),\ \Eprint {http://arxiv.org/abs/1910.06343}
  {arXiv:1910.06343 [quant-ph]} \BibitemShut {NoStop}%
\bibitem [{\citenamefont {Paris}\ and\ \citenamefont
  {Rehacek}(2004)}]{Paris2004esimation}%
  \BibitemOpen
  \bibfield  {author} {\bibinfo {author} {\bibfnamefont {M.}~\bibnamefont
  {Paris}}\ and\ \bibinfo {author} {\bibfnamefont {J.~e.}\ \bibnamefont
  {Rehacek}},\ }\href {\doibase 10.1007/b98673} {\bibfield  {journal} {\bibinfo
   {journal} {in Lect. Notes Phys.}\ } (\bibinfo {year} {2004}),\
  10.1007/b98673}\BibitemShut {NoStop}%
\bibitem [{\citenamefont {Vogel}\ and\ \citenamefont
  {Risken}(1989)}]{Vogel1989Determination}%
  \BibitemOpen
  \bibfield  {author} {\bibinfo {author} {\bibfnamefont {K.}~\bibnamefont
  {Vogel}}\ and\ \bibinfo {author} {\bibfnamefont {H.}~\bibnamefont {Risken}},\
  }\href {\doibase 10.1103/PhysRevA.40.2847} {\bibfield  {journal} {\bibinfo
  {journal} {Phys. Rev. A}\ }\textbf {\bibinfo {volume} {40}},\ \bibinfo
  {pages} {2847} (\bibinfo {year} {1989})}\BibitemShut {NoStop}%
\bibitem [{\citenamefont {Gross}\ \emph {et~al.}(2010)\citenamefont {Gross},
  \citenamefont {Liu}, \citenamefont {Flammia}, \citenamefont {Becker},\ and\
  \citenamefont {Eisert}}]{gross2010quantum}%
  \BibitemOpen
  \bibfield  {author} {\bibinfo {author} {\bibfnamefont {D.}~\bibnamefont
  {Gross}}, \bibinfo {author} {\bibfnamefont {Y.-K.}\ \bibnamefont {Liu}},
  \bibinfo {author} {\bibfnamefont {S.~T.}\ \bibnamefont {Flammia}}, \bibinfo
  {author} {\bibfnamefont {S.}~\bibnamefont {Becker}}, \ and\ \bibinfo {author}
  {\bibfnamefont {J.}~\bibnamefont {Eisert}},\ }\href {\doibase
  10.1103/PhysRevLett.105.150401} {\bibfield  {journal} {\bibinfo  {journal}
  {Phys. Rev. Lett.}\ }\textbf {\bibinfo {volume} {105}},\ \bibinfo {pages}
  {150401} (\bibinfo {year} {2010})}\BibitemShut {NoStop}%
\bibitem [{\citenamefont {Flammia}\ \emph {et~al.}(2012)\citenamefont
  {Flammia}, \citenamefont {Gross}, \citenamefont {Liu},\ and\ \citenamefont
  {Eisert}}]{Flammia2012compressed}%
  \BibitemOpen
  \bibfield  {author} {\bibinfo {author} {\bibfnamefont {S.~T.}\ \bibnamefont
  {Flammia}}, \bibinfo {author} {\bibfnamefont {D.}~\bibnamefont {Gross}},
  \bibinfo {author} {\bibfnamefont {Y.-K.}\ \bibnamefont {Liu}}, \ and\
  \bibinfo {author} {\bibfnamefont {J.}~\bibnamefont {Eisert}},\ }\href
  {http://stacks.iop.org/1367-2630/14/i=9/a=095022} {\bibfield  {journal}
  {\bibinfo  {journal} {New Journal of Physics}\ }\textbf {\bibinfo {volume}
  {14}},\ \bibinfo {pages} {095022} (\bibinfo {year} {2012})}\BibitemShut
  {NoStop}%
\bibitem [{\citenamefont {Cramer}\ \emph {et~al.}(2010)\citenamefont {Cramer},
  \citenamefont {Plenio}, \citenamefont {Flammia}, \citenamefont {Somma},
  \citenamefont {Gross}, \citenamefont {Bartlett}, \citenamefont
  {Landon-Cardinal}, \citenamefont {Poulin},\ and\ \citenamefont
  {Liu}}]{cramer2010efficient}%
  \BibitemOpen
  \bibfield  {author} {\bibinfo {author} {\bibfnamefont {M.}~\bibnamefont
  {Cramer}}, \bibinfo {author} {\bibfnamefont {M.~B.}\ \bibnamefont {Plenio}},
  \bibinfo {author} {\bibfnamefont {S.~T.}\ \bibnamefont {Flammia}}, \bibinfo
  {author} {\bibfnamefont {R.}~\bibnamefont {Somma}}, \bibinfo {author}
  {\bibfnamefont {D.}~\bibnamefont {Gross}}, \bibinfo {author} {\bibfnamefont
  {S.~D.}\ \bibnamefont {Bartlett}}, \bibinfo {author} {\bibfnamefont
  {O.}~\bibnamefont {Landon-Cardinal}}, \bibinfo {author} {\bibfnamefont
  {D.}~\bibnamefont {Poulin}}, \ and\ \bibinfo {author} {\bibfnamefont {Y.-K.}\
  \bibnamefont {Liu}},\ }\href {https://www.nature.com/articles/ncomms1147}
  {\bibfield  {journal} {\bibinfo  {journal} {Nature communications}\ }\textbf
  {\bibinfo {volume} {1}},\ \bibinfo {pages} {149} (\bibinfo {year}
  {2010})}\BibitemShut {NoStop}%
\bibitem [{\citenamefont {Baumgratz}\ \emph {et~al.}(2013)\citenamefont
  {Baumgratz}, \citenamefont {Gross}, \citenamefont {Cramer},\ and\
  \citenamefont {Plenio}}]{Baumgratz2013Scalable}%
  \BibitemOpen
  \bibfield  {author} {\bibinfo {author} {\bibfnamefont {T.}~\bibnamefont
  {Baumgratz}}, \bibinfo {author} {\bibfnamefont {D.}~\bibnamefont {Gross}},
  \bibinfo {author} {\bibfnamefont {M.}~\bibnamefont {Cramer}}, \ and\ \bibinfo
  {author} {\bibfnamefont {M.~B.}\ \bibnamefont {Plenio}},\ }\href {\doibase
  10.1103/PhysRevLett.111.020401} {\bibfield  {journal} {\bibinfo  {journal}
  {Phys. Rev. Lett.}\ }\textbf {\bibinfo {volume} {111}},\ \bibinfo {pages}
  {020401} (\bibinfo {year} {2013})}\BibitemShut {NoStop}%
\bibitem [{\citenamefont {Lanyon}\ \emph {et~al.}(2017)\citenamefont {Lanyon},
  \citenamefont {Maier}, \citenamefont {Holz{\"a}pfel}, \citenamefont
  {Baumgratz}, \citenamefont {Hempel}, \citenamefont {Jurcevic}, \citenamefont
  {Dhand}, \citenamefont {Buyskikh}, \citenamefont {Daley}, \citenamefont
  {Cramer} \emph {et~al.}}]{Lanyon2017Efficient}%
  \BibitemOpen
  \bibfield  {author} {\bibinfo {author} {\bibfnamefont {B.}~\bibnamefont
  {Lanyon}}, \bibinfo {author} {\bibfnamefont {C.}~\bibnamefont {Maier}},
  \bibinfo {author} {\bibfnamefont {M.}~\bibnamefont {Holz{\"a}pfel}}, \bibinfo
  {author} {\bibfnamefont {T.}~\bibnamefont {Baumgratz}}, \bibinfo {author}
  {\bibfnamefont {C.}~\bibnamefont {Hempel}}, \bibinfo {author} {\bibfnamefont
  {P.}~\bibnamefont {Jurcevic}}, \bibinfo {author} {\bibfnamefont
  {I.}~\bibnamefont {Dhand}}, \bibinfo {author} {\bibfnamefont
  {A.}~\bibnamefont {Buyskikh}}, \bibinfo {author} {\bibfnamefont
  {A.}~\bibnamefont {Daley}}, \bibinfo {author} {\bibfnamefont
  {M.}~\bibnamefont {Cramer}},  \emph {et~al.},\ }\href
  {https://www.nature.com/articles/nphys4244} {\bibfield  {journal} {\bibinfo
  {journal} {Nature Physics}\ }\textbf {\bibinfo {volume} {13}},\ \bibinfo
  {pages} {1158} (\bibinfo {year} {2017})}\BibitemShut {NoStop}%
\bibitem [{\citenamefont {T\'oth}\ \emph {et~al.}(2010)\citenamefont {T\'oth},
  \citenamefont {Wieczorek}, \citenamefont {Gross}, \citenamefont {Krischek},
  \citenamefont {Schwemmer},\ and\ \citenamefont
  {Weinfurter}}]{Toth2010Permutation}%
  \BibitemOpen
  \bibfield  {author} {\bibinfo {author} {\bibfnamefont {G.}~\bibnamefont
  {T\'oth}}, \bibinfo {author} {\bibfnamefont {W.}~\bibnamefont {Wieczorek}},
  \bibinfo {author} {\bibfnamefont {D.}~\bibnamefont {Gross}}, \bibinfo
  {author} {\bibfnamefont {R.}~\bibnamefont {Krischek}}, \bibinfo {author}
  {\bibfnamefont {C.}~\bibnamefont {Schwemmer}}, \ and\ \bibinfo {author}
  {\bibfnamefont {H.}~\bibnamefont {Weinfurter}},\ }\href {\doibase
  10.1103/PhysRevLett.105.250403} {\bibfield  {journal} {\bibinfo  {journal}
  {Phys. Rev. Lett.}\ }\textbf {\bibinfo {volume} {105}},\ \bibinfo {pages}
  {250403} (\bibinfo {year} {2010})}\BibitemShut {NoStop}%
\bibitem [{\citenamefont {Moroder}\ \emph {et~al.}(2012)\citenamefont
  {Moroder}, \citenamefont {Hyllus}, \citenamefont {T\'oth}, \citenamefont
  {Schwemmer}, \citenamefont {Niggebaum}, \citenamefont {Gaile}, \citenamefont
  {G{\"u}hne},\ and\ \citenamefont {Weinfurter}}]{Tobias2012Permutation}%
  \BibitemOpen
  \bibfield  {author} {\bibinfo {author} {\bibfnamefont {T.}~\bibnamefont
  {Moroder}}, \bibinfo {author} {\bibfnamefont {P.}~\bibnamefont {Hyllus}},
  \bibinfo {author} {\bibfnamefont {G.}~\bibnamefont {T\'oth}}, \bibinfo
  {author} {\bibfnamefont {C.}~\bibnamefont {Schwemmer}}, \bibinfo {author}
  {\bibfnamefont {A.}~\bibnamefont {Niggebaum}}, \bibinfo {author}
  {\bibfnamefont {S.}~\bibnamefont {Gaile}}, \bibinfo {author} {\bibfnamefont
  {O.}~\bibnamefont {G{\"u}hne}}, \ and\ \bibinfo {author} {\bibfnamefont
  {H.}~\bibnamefont {Weinfurter}},\ }\href
  {http://stacks.iop.org/1367-2630/14/i=10/a=105001} {\bibfield  {journal}
  {\bibinfo  {journal} {New Journal of Physics}\ }\textbf {\bibinfo {volume}
  {14}},\ \bibinfo {pages} {105001} (\bibinfo {year} {2012})}\BibitemShut
  {NoStop}%
\bibitem [{\citenamefont {Zhou}\ \emph
  {et~al.}(2019{\natexlab{a}})\citenamefont {Zhou}, \citenamefont {Guo},\ and\
  \citenamefont {Ma}}]{Zhou2019Decomposition}%
  \BibitemOpen
  \bibfield  {author} {\bibinfo {author} {\bibfnamefont {Y.}~\bibnamefont
  {Zhou}}, \bibinfo {author} {\bibfnamefont {C.}~\bibnamefont {Guo}}, \ and\
  \bibinfo {author} {\bibfnamefont {X.}~\bibnamefont {Ma}},\ }\href {\doibase
  10.1103/PhysRevA.99.052324} {\bibfield  {journal} {\bibinfo  {journal} {Phys.
  Rev. A}\ }\textbf {\bibinfo {volume} {99}},\ \bibinfo {pages} {052324}
  (\bibinfo {year} {2019}{\natexlab{a}})}\BibitemShut {NoStop}%
\bibitem [{\citenamefont {Flammia}\ and\ \citenamefont
  {Liu}(2011)}]{flammia2011direct}%
  \BibitemOpen
  \bibfield  {author} {\bibinfo {author} {\bibfnamefont {S.~T.}\ \bibnamefont
  {Flammia}}\ and\ \bibinfo {author} {\bibfnamefont {Y.-K.}\ \bibnamefont
  {Liu}},\ }\href {\doibase 10.1103/PhysRevLett.106.230501} {\bibfield
  {journal} {\bibinfo  {journal} {Phys. Rev. Lett.}\ }\textbf {\bibinfo
  {volume} {106}},\ \bibinfo {pages} {230501} (\bibinfo {year}
  {2011})}\BibitemShut {NoStop}%
\bibitem [{\citenamefont {Merkel}\ \emph {et~al.}(2013)\citenamefont {Merkel},
  \citenamefont {Gambetta}, \citenamefont {Smolin}, \citenamefont {Poletto},
  \citenamefont {C\'orcoles}, \citenamefont {Johnson}, \citenamefont {Ryan},\
  and\ \citenamefont {Steffen}}]{merkey2013self}%
  \BibitemOpen
  \bibfield  {author} {\bibinfo {author} {\bibfnamefont {S.~T.}\ \bibnamefont
  {Merkel}}, \bibinfo {author} {\bibfnamefont {J.~M.}\ \bibnamefont
  {Gambetta}}, \bibinfo {author} {\bibfnamefont {J.~A.}\ \bibnamefont
  {Smolin}}, \bibinfo {author} {\bibfnamefont {S.}~\bibnamefont {Poletto}},
  \bibinfo {author} {\bibfnamefont {A.~D.}\ \bibnamefont {C\'orcoles}},
  \bibinfo {author} {\bibfnamefont {B.~R.}\ \bibnamefont {Johnson}}, \bibinfo
  {author} {\bibfnamefont {C.~A.}\ \bibnamefont {Ryan}}, \ and\ \bibinfo
  {author} {\bibfnamefont {M.}~\bibnamefont {Steffen}},\ }\href {\doibase
  10.1103/PhysRevA.87.062119} {\bibfield  {journal} {\bibinfo  {journal} {Phys.
  Rev. A}\ }\textbf {\bibinfo {volume} {87}},\ \bibinfo {pages} {062119}
  (\bibinfo {year} {2013})}\BibitemShut {NoStop}%
\bibitem [{\citenamefont {Blume-Kohout}\ \emph {et~al.}(2013)\citenamefont
  {Blume-Kohout}, \citenamefont {Gamble}, \citenamefont {Nielsen},
  \citenamefont {Mizrahi}, \citenamefont {Sterk},\ and\ \citenamefont
  {Maunz}}]{blumekohout2013robust}%
  \BibitemOpen
  \bibfield  {author} {\bibinfo {author} {\bibfnamefont {R.}~\bibnamefont
  {Blume-Kohout}}, \bibinfo {author} {\bibfnamefont {J.~K.}\ \bibnamefont
  {Gamble}}, \bibinfo {author} {\bibfnamefont {E.}~\bibnamefont {Nielsen}},
  \bibinfo {author} {\bibfnamefont {J.}~\bibnamefont {Mizrahi}}, \bibinfo
  {author} {\bibfnamefont {J.~D.}\ \bibnamefont {Sterk}}, \ and\ \bibinfo
  {author} {\bibfnamefont {P.}~\bibnamefont {Maunz}},\ }\href
  {https://arxiv.org/abs/1310.4492} {\enquote {\bibinfo {title} {Robust,
  self-consistent, closed-form tomography of quantum logic gates on a trapped
  ion qubit},}\ } (\bibinfo {year} {2013}),\ \Eprint
  {http://arxiv.org/abs/1310.4492} {arXiv:1310.4492 [quant-ph]} \BibitemShut
  {NoStop}%
\bibitem [{\citenamefont {Emerson}\ \emph {et~al.}(2005)\citenamefont
  {Emerson}, \citenamefont {Alicki},\ and\ \citenamefont
  {{\.{Z}}yczkowski}}]{emerson2005scalable}%
  \BibitemOpen
  \bibfield  {author} {\bibinfo {author} {\bibfnamefont {J.}~\bibnamefont
  {Emerson}}, \bibinfo {author} {\bibfnamefont {R.}~\bibnamefont {Alicki}}, \
  and\ \bibinfo {author} {\bibfnamefont {K.}~\bibnamefont {{\.{Z}}yczkowski}},\
  }\href {\doibase 10.1088/1464-4266/7/10/021} {\bibfield  {journal} {\bibinfo
  {journal} {Journal of Optics B: Quantum and Semiclassical Optics}\ }\textbf
  {\bibinfo {volume} {7}},\ \bibinfo {pages} {S347} (\bibinfo {year}
  {2005})}\BibitemShut {NoStop}%
\bibitem [{\citenamefont {Dankert}\ \emph {et~al.}(2009)\citenamefont
  {Dankert}, \citenamefont {Cleve}, \citenamefont {Emerson},\ and\
  \citenamefont {Livine}}]{dankert2009exant}%
  \BibitemOpen
  \bibfield  {author} {\bibinfo {author} {\bibfnamefont {C.}~\bibnamefont
  {Dankert}}, \bibinfo {author} {\bibfnamefont {R.}~\bibnamefont {Cleve}},
  \bibinfo {author} {\bibfnamefont {J.}~\bibnamefont {Emerson}}, \ and\
  \bibinfo {author} {\bibfnamefont {E.}~\bibnamefont {Livine}},\ }\href
  {\doibase 10.1103/PhysRevA.80.012304} {\bibfield  {journal} {\bibinfo
  {journal} {Phys. Rev. A}\ }\textbf {\bibinfo {volume} {80}},\ \bibinfo
  {pages} {012304} (\bibinfo {year} {2009})}\BibitemShut {NoStop}%
\bibitem [{\citenamefont {Magesan}\ \emph {et~al.}(2011)\citenamefont
  {Magesan}, \citenamefont {Gambetta},\ and\ \citenamefont
  {Emerson}}]{magesan2011scalable}%
  \BibitemOpen
  \bibfield  {author} {\bibinfo {author} {\bibfnamefont {E.}~\bibnamefont
  {Magesan}}, \bibinfo {author} {\bibfnamefont {J.~M.}\ \bibnamefont
  {Gambetta}}, \ and\ \bibinfo {author} {\bibfnamefont {J.}~\bibnamefont
  {Emerson}},\ }\href {\doibase 10.1103/PhysRevLett.106.180504} {\bibfield
  {journal} {\bibinfo  {journal} {Phys. Rev. Lett.}\ }\textbf {\bibinfo
  {volume} {106}},\ \bibinfo {pages} {180504} (\bibinfo {year}
  {2011})}\BibitemShut {NoStop}%
\bibitem [{\citenamefont {Bennett}\ and\ \citenamefont
  {Brassard}(1984)}]{bennett1984quantum}%
  \BibitemOpen
  \bibfield  {author} {\bibinfo {author} {\bibfnamefont {C.~H.}\ \bibnamefont
  {Bennett}}\ and\ \bibinfo {author} {\bibfnamefont {G.}~\bibnamefont
  {Brassard}},\ }in\ \href {https://doi.org/10.1016/j.tcs.2014.05.025} {\emph
  {\bibinfo {booktitle} {Proceedings of the IEEE International Conference on
  Computers, Systems and Signal Processing}}}\ (\bibinfo  {publisher} {IEEE
  Press},\ \bibinfo {address} {New York},\ \bibinfo {year} {1984})\ pp.\
  \bibinfo {pages} {175--179}\BibitemShut {NoStop}%
\bibitem [{\citenamefont {Lo}\ and\ \citenamefont
  {Chau}(1999)}]{lo1999unconditional}%
  \BibitemOpen
  \bibfield  {author} {\bibinfo {author} {\bibfnamefont {H.~K.}\ \bibnamefont
  {Lo}}\ and\ \bibinfo {author} {\bibfnamefont {H.~F.}\ \bibnamefont {Chau}},\
  }\href {http://science.sciencemag.org/content/283/5410/2050} {\bibfield
  {journal} {\bibinfo  {journal} {Science}\ }\textbf {\bibinfo {volume}
  {283}},\ \bibinfo {pages} {2050} (\bibinfo {year} {1999})}\BibitemShut
  {NoStop}%
\bibitem [{\citenamefont {Broadbent}\ \emph {et~al.}(2009)\citenamefont
  {Broadbent}, \citenamefont {Fitzsimons},\ and\ \citenamefont
  {Kashefi}}]{broadbent2009universal}%
  \BibitemOpen
  \bibfield  {author} {\bibinfo {author} {\bibfnamefont {A.}~\bibnamefont
  {Broadbent}}, \bibinfo {author} {\bibfnamefont {J.}~\bibnamefont
  {Fitzsimons}}, \ and\ \bibinfo {author} {\bibfnamefont {E.}~\bibnamefont
  {Kashefi}},\ }in\ \href
  {https://ieeexplore.ieee.org/abstract/document/5438603} {\emph {\bibinfo
  {booktitle} {2009 50th Annual IEEE Symposium on Foundations of Computer
  Science}}}\ (\bibinfo {organization} {IEEE},\ \bibinfo {year} {2009})\ pp.\
  \bibinfo {pages} {517--526}\BibitemShut {NoStop}%
\bibitem [{\citenamefont {Brunner}\ \emph {et~al.}(2014)\citenamefont
  {Brunner}, \citenamefont {Cavalcanti}, \citenamefont {Pironio}, \citenamefont
  {Scarani},\ and\ \citenamefont {Wehner}}]{brunner2014bell}%
  \BibitemOpen
  \bibfield  {author} {\bibinfo {author} {\bibfnamefont {N.}~\bibnamefont
  {Brunner}}, \bibinfo {author} {\bibfnamefont {D.}~\bibnamefont {Cavalcanti}},
  \bibinfo {author} {\bibfnamefont {S.}~\bibnamefont {Pironio}}, \bibinfo
  {author} {\bibfnamefont {V.}~\bibnamefont {Scarani}}, \ and\ \bibinfo
  {author} {\bibfnamefont {S.}~\bibnamefont {Wehner}},\ }\href {\doibase
  10.1103/RevModPhys.86.419} {\bibfield  {journal} {\bibinfo  {journal} {Rev.
  Mod. Phys.}\ }\textbf {\bibinfo {volume} {86}},\ \bibinfo {pages} {419}
  (\bibinfo {year} {2014})}\BibitemShut {NoStop}%
\bibitem [{\citenamefont {艩upi膰}\ and\ \citenamefont
  {Bowles}(2019)}]{Ivan2019self}%
  \BibitemOpen
  \bibfield  {author} {\bibinfo {author} {\bibfnamefont {I.}~\bibnamefont
  {艩upi膰}}\ and\ \bibinfo {author} {\bibfnamefont {J.}~\bibnamefont
  {Bowles}},\ }\href {https://arxiv.org/abs/1904.10042} {\bibfield  {journal}
  {\bibinfo  {journal} {arXiv:1904.10042}\ } (\bibinfo {year}
  {2019})}\BibitemShut {NoStop}%
\bibitem [{\citenamefont {Pallister}\ \emph {et~al.}(2018)\citenamefont
  {Pallister}, \citenamefont {Linden},\ and\ \citenamefont
  {Montanaro}}]{sam2018optimal}%
  \BibitemOpen
  \bibfield  {author} {\bibinfo {author} {\bibfnamefont {S.}~\bibnamefont
  {Pallister}}, \bibinfo {author} {\bibfnamefont {N.}~\bibnamefont {Linden}}, \
  and\ \bibinfo {author} {\bibfnamefont {A.}~\bibnamefont {Montanaro}},\ }\href
  {\doibase 10.1103/PhysRevLett.120.170502} {\bibfield  {journal} {\bibinfo
  {journal} {Phys. Rev. Lett.}\ }\textbf {\bibinfo {volume} {120}},\ \bibinfo
  {pages} {170502} (\bibinfo {year} {2018})}\BibitemShut {NoStop}%
\bibitem [{\citenamefont {Zhu}\ and\ \citenamefont
  {Hayashi}(2019{\natexlab{a}})}]{zhu2019efficient}%
  \BibitemOpen
  \bibfield  {author} {\bibinfo {author} {\bibfnamefont {H.}~\bibnamefont
  {Zhu}}\ and\ \bibinfo {author} {\bibfnamefont {M.}~\bibnamefont {Hayashi}},\
  }\href {https://arxiv.org/abs/1909.01900} {\enquote {\bibinfo {title}
  {Efficient verification of pure quantum states in the adversarial
  scenario},}\ } (\bibinfo {year} {2019}{\natexlab{a}}),\ \Eprint
  {http://arxiv.org/abs/1909.01900} {arXiv:1909.01900 [quant-ph]} \BibitemShut
  {NoStop}%
\bibitem [{\citenamefont {Zhu}\ and\ \citenamefont
  {Hayashi}(2019{\natexlab{b}})}]{zhu2019optimal}%
  \BibitemOpen
  \bibfield  {author} {\bibinfo {author} {\bibfnamefont {H.}~\bibnamefont
  {Zhu}}\ and\ \bibinfo {author} {\bibfnamefont {M.}~\bibnamefont {Hayashi}},\
  }\href {\doibase 10.1103/PhysRevA.99.052346} {\bibfield  {journal} {\bibinfo
  {journal} {Phys. Rev. A}\ }\textbf {\bibinfo {volume} {99}},\ \bibinfo
  {pages} {052346} (\bibinfo {year} {2019}{\natexlab{b}})}\BibitemShut
  {NoStop}%
\bibitem [{\citenamefont {Liu}\ and\ \citenamefont
  {Winter}(2019)}]{liu2019resource}%
  \BibitemOpen
  \bibfield  {author} {\bibinfo {author} {\bibfnamefont {Z.-W.}\ \bibnamefont
  {Liu}}\ and\ \bibinfo {author} {\bibfnamefont {A.}~\bibnamefont {Winter}},\
  }\href {https://arxiv.org/abs/1904.04201} {\enquote {\bibinfo {title}
  {Resource theories of quantum channels and the universal role of resource
  erasure},}\ } (\bibinfo {year} {2019}),\ \Eprint
  {http://arxiv.org/abs/1904.04201} {arXiv:1904.04201 [quant-ph]} \BibitemShut
  {NoStop}%
\bibitem [{\citenamefont {Yuan}\ \emph {et~al.}(2019)\citenamefont {Yuan},
  \citenamefont {Liu}, \citenamefont {Zhao}, \citenamefont {Regula},
  \citenamefont {Thompson},\ and\ \citenamefont {Gu}}]{yuan2019robustness}%
  \BibitemOpen
  \bibfield  {author} {\bibinfo {author} {\bibfnamefont {X.}~\bibnamefont
  {Yuan}}, \bibinfo {author} {\bibfnamefont {Y.}~\bibnamefont {Liu}}, \bibinfo
  {author} {\bibfnamefont {Q.}~\bibnamefont {Zhao}}, \bibinfo {author}
  {\bibfnamefont {B.}~\bibnamefont {Regula}}, \bibinfo {author} {\bibfnamefont
  {J.}~\bibnamefont {Thompson}}, \ and\ \bibinfo {author} {\bibfnamefont
  {M.}~\bibnamefont {Gu}},\ }\href {https://arxiv.org/abs/1907.02521} {\enquote
  {\bibinfo {title} {Robustness of quantum memories: An operational
  resource-theoretic approach},}\ } (\bibinfo {year} {2019}),\ \Eprint
  {http://arxiv.org/abs/1907.02521} {arXiv:1907.02521 [quant-ph]} \BibitemShut
  {NoStop}%
\bibitem [{\citenamefont {Wang}\ and\ \citenamefont
  {Hayashi}(2019)}]{wang2019optimal}%
  \BibitemOpen
  \bibfield  {author} {\bibinfo {author} {\bibfnamefont {K.}~\bibnamefont
  {Wang}}\ and\ \bibinfo {author} {\bibfnamefont {M.}~\bibnamefont {Hayashi}},\
  }\href {\doibase 10.1103/PhysRevA.100.032315} {\bibfield  {journal} {\bibinfo
   {journal} {Phys. Rev. A}\ }\textbf {\bibinfo {volume} {100}},\ \bibinfo
  {pages} {032315} (\bibinfo {year} {2019})}\BibitemShut {NoStop}%
\bibitem [{\citenamefont {Li}\ \emph {et~al.}(2019)\citenamefont {Li},
  \citenamefont {Han},\ and\ \citenamefont {Zhu}}]{li2019efficient}%
  \BibitemOpen
  \bibfield  {author} {\bibinfo {author} {\bibfnamefont {Z.}~\bibnamefont
  {Li}}, \bibinfo {author} {\bibfnamefont {Y.-G.}\ \bibnamefont {Han}}, \ and\
  \bibinfo {author} {\bibfnamefont {H.}~\bibnamefont {Zhu}},\ }\href {\doibase
  10.1103/PhysRevA.100.032316} {\bibfield  {journal} {\bibinfo  {journal}
  {Phys. Rev. A}\ }\textbf {\bibinfo {volume} {100}},\ \bibinfo {pages}
  {032316} (\bibinfo {year} {2019})}\BibitemShut {NoStop}%
\bibitem [{\citenamefont {Yu}\ \emph {et~al.}(2019)\citenamefont {Yu},
  \citenamefont {Shang},\ and\ \citenamefont {Guhne}}]{Yu2019Optimal}%
  \BibitemOpen
  \bibfield  {author} {\bibinfo {author} {\bibfnamefont {X.-D.}\ \bibnamefont
  {Yu}}, \bibinfo {author} {\bibfnamefont {J.}~\bibnamefont {Shang}}, \ and\
  \bibinfo {author} {\bibfnamefont {O.}~\bibnamefont {Guhne}},\ }\href
  {\doibase 10.1038/s41534-019-0226-z} {\bibfield  {journal} {\bibinfo
  {journal} {npj Quantum Information}\ }\textbf {\bibinfo {volume} {5}},\
  \bibinfo {pages} {112} (\bibinfo {year} {2019})}\BibitemShut {NoStop}%
\bibitem [{\citenamefont {Zhu}\ and\ \citenamefont
  {Hayashi}(2019{\natexlab{c}})}]{zhu2019general}%
  \BibitemOpen
  \bibfield  {author} {\bibinfo {author} {\bibfnamefont {H.}~\bibnamefont
  {Zhu}}\ and\ \bibinfo {author} {\bibfnamefont {M.}~\bibnamefont {Hayashi}},\
  }\href {https://arxiv.org/abs/1909.01943} {\enquote {\bibinfo {title}
  {General framework for verifying pure quantum states in the adversarial
  scenario},}\ } (\bibinfo {year} {2019}{\natexlab{c}}),\ \Eprint
  {http://arxiv.org/abs/1909.01943} {arXiv:1909.01943 [quant-ph]} \BibitemShut
  {NoStop}%
\bibitem [{\citenamefont {Hayashi}\ \emph {et~al.}(2006)\citenamefont
  {Hayashi}, \citenamefont {Matsumoto},\ and\ \citenamefont
  {Tsuda}}]{Hayashi2006LOCC}%
  \BibitemOpen
  \bibfield  {author} {\bibinfo {author} {\bibfnamefont {M.}~\bibnamefont
  {Hayashi}}, \bibinfo {author} {\bibfnamefont {K.}~\bibnamefont {Matsumoto}},
  \ and\ \bibinfo {author} {\bibfnamefont {Y.}~\bibnamefont {Tsuda}},\ }\href
  {\doibase 10.1088/0305-4470/39/46/013} {\bibfield  {journal} {\bibinfo
  {journal} {Journal of Physics A: Mathematical and General}\ }\textbf
  {\bibinfo {volume} {39}},\ \bibinfo {pages} {14427} (\bibinfo {year}
  {2006})}\BibitemShut {NoStop}%
\bibitem [{\citenamefont {ZAUNER}(2011)}]{ZAUNER2011DESIGNS}%
  \BibitemOpen
  \bibfield  {author} {\bibinfo {author} {\bibfnamefont {G.}~\bibnamefont
  {ZAUNER}},\ }\href {\doibase 10.1142/S0219749911006776} {\bibfield  {journal}
  {\bibinfo  {journal} {International Journal of Quantum Information}\ }\textbf
  {\bibinfo {volume} {09}},\ \bibinfo {pages} {445} (\bibinfo {year} {2011})},\
  \Eprint {http://arxiv.org/abs/https://doi.org/10.1142/S0219749911006776}
  {https://doi.org/10.1142/S0219749911006776} \BibitemShut {NoStop}%
\bibitem [{\citenamefont {Renes}\ \emph {et~al.}(2004)\citenamefont {Renes},
  \citenamefont {Blume-Kohout}, \citenamefont {Scott},\ and\ \citenamefont
  {Caves}}]{Renes2004Symmetric}%
  \BibitemOpen
  \bibfield  {author} {\bibinfo {author} {\bibfnamefont {J.~M.}\ \bibnamefont
  {Renes}}, \bibinfo {author} {\bibfnamefont {R.}~\bibnamefont {Blume-Kohout}},
  \bibinfo {author} {\bibfnamefont {A.~J.}\ \bibnamefont {Scott}}, \ and\
  \bibinfo {author} {\bibfnamefont {C.~M.}\ \bibnamefont {Caves}},\ }\href
  {\doibase 10.1063/1.1737053} {\bibfield  {journal} {\bibinfo  {journal}
  {Journal of Mathematical Physics}\ }\textbf {\bibinfo {volume} {45}},\
  \bibinfo {pages} {2171} (\bibinfo {year} {2004})},\ \Eprint
  {http://arxiv.org/abs/https://doi.org/10.1063/1.1737053}
  {https://doi.org/10.1063/1.1737053} \BibitemShut {NoStop}%
\bibitem [{\citenamefont {T\'oth}\ and\ \citenamefont
  {G\"uhne}(2005)}]{Toth2005Detecting}%
  \BibitemOpen
  \bibfield  {author} {\bibinfo {author} {\bibfnamefont {G.}~\bibnamefont
  {T\'oth}}\ and\ \bibinfo {author} {\bibfnamefont {O.}~\bibnamefont
  {G\"uhne}},\ }\href {\doibase 10.1103/PhysRevLett.94.060501} {\bibfield
  {journal} {\bibinfo  {journal} {Phys. Rev. Lett.}\ }\textbf {\bibinfo
  {volume} {94}},\ \bibinfo {pages} {060501} (\bibinfo {year}
  {2005})}\BibitemShut {NoStop}%
\bibitem [{\citenamefont {Zhu}\ and\ \citenamefont
  {Hayashi}(2019{\natexlab{d}})}]{Zhu2018hypergraph}%
  \BibitemOpen
  \bibfield  {author} {\bibinfo {author} {\bibfnamefont {H.}~\bibnamefont
  {Zhu}}\ and\ \bibinfo {author} {\bibfnamefont {M.}~\bibnamefont {Hayashi}},\
  }\href {\doibase 10.1103/PhysRevApplied.12.054047} {\bibfield  {journal}
  {\bibinfo  {journal} {Phys. Rev. Applied}\ }\textbf {\bibinfo {volume}
  {12}},\ \bibinfo {pages} {054047} (\bibinfo {year}
  {2019}{\natexlab{d}})}\BibitemShut {NoStop}%
\bibitem [{\citenamefont {Zhou}\ \emph
  {et~al.}(2019{\natexlab{b}})\citenamefont {Zhou}, \citenamefont {Zhao},
  \citenamefont {Yuan},\ and\ \citenamefont {Ma}}]{Zhou2019structure}%
  \BibitemOpen
  \bibfield  {author} {\bibinfo {author} {\bibfnamefont {Y.}~\bibnamefont
  {Zhou}}, \bibinfo {author} {\bibfnamefont {Q.}~\bibnamefont {Zhao}}, \bibinfo
  {author} {\bibfnamefont {X.}~\bibnamefont {Yuan}}, \ and\ \bibinfo {author}
  {\bibfnamefont {X.}~\bibnamefont {Ma}},\ }\href {\doibase
  10.1038/s41534-019-0200-9} {\bibfield  {journal} {\bibinfo  {journal} {npj
  Quantum Information}\ }\textbf {\bibinfo {volume} {5}},\ \bibinfo {pages}
  {83} (\bibinfo {year} {2019}{\natexlab{b}})}\BibitemShut {NoStop}%
\bibitem [{\citenamefont {Hein}\ \emph {et~al.}(2006)\citenamefont {Hein},
  \citenamefont {D眉r}, \citenamefont {Eisert}, \citenamefont {Raussendorf},
  \citenamefont {den Nest},\ and\ \citenamefont {Briegel}}]{Hein2006Graph}%
  \BibitemOpen
  \bibfield  {author} {\bibinfo {author} {\bibfnamefont {M.}~\bibnamefont
  {Hein}}, \bibinfo {author} {\bibfnamefont {W.}~\bibnamefont {D眉r}}, \bibinfo
  {author} {\bibfnamefont {J.}~\bibnamefont {Eisert}}, \bibinfo {author}
  {\bibfnamefont {R.}~\bibnamefont {Raussendorf}}, \bibinfo {author}
  {\bibfnamefont {M.~V.}\ \bibnamefont {den Nest}}, \ and\ \bibinfo {author}
  {\bibfnamefont {H.~J.}\ \bibnamefont {Briegel}},\ }\href
  {https://arxiv.org/abs/quant-ph/0602096} {\enquote {\bibinfo {title}
  {Entanglement in graph states and its applications},}\ } (\bibinfo {year}
  {2006}),\ \Eprint {http://arxiv.org/abs/quant-ph/0602096}
  {arXiv:quant-ph/0602096 [quant-ph]} \BibitemShut {NoStop}%
\bibitem [{\citenamefont {Rossi}\ \emph {et~al.}(2013)\citenamefont {Rossi},
  \citenamefont {Huber}, \citenamefont {BruB},\ and\ \citenamefont
  {Macchiavello}}]{Rossi2013hyper}%
  \BibitemOpen
  \bibfield  {author} {\bibinfo {author} {\bibfnamefont {M.}~\bibnamefont
  {Rossi}}, \bibinfo {author} {\bibfnamefont {M.}~\bibnamefont {Huber}},
  \bibinfo {author} {\bibfnamefont {D.}~\bibnamefont {BruB}}, \ and\ \bibinfo
  {author} {\bibfnamefont {C.}~\bibnamefont {Macchiavello}},\ }\href
  {http://stacks.iop.org/1367-2630/15/i=11/a=113022} {\bibfield  {journal}
  {\bibinfo  {journal} {New Journal of Physics}\ }\textbf {\bibinfo {volume}
  {15}},\ \bibinfo {pages} {113022} (\bibinfo {year} {2013})}\BibitemShut
  {NoStop}%
\bibitem [{\citenamefont {Mani}\ and\ \citenamefont
  {Karimipour}(2015)}]{Mani2015Cohering}%
  \BibitemOpen
  \bibfield  {author} {\bibinfo {author} {\bibfnamefont {A.}~\bibnamefont
  {Mani}}\ and\ \bibinfo {author} {\bibfnamefont {V.}~\bibnamefont
  {Karimipour}},\ }\href {\doibase 10.1103/PhysRevA.92.032331} {\bibfield
  {journal} {\bibinfo  {journal} {Phys. Rev. A}\ }\textbf {\bibinfo {volume}
  {92}},\ \bibinfo {pages} {032331} (\bibinfo {year} {2015})}\BibitemShut
  {NoStop}%
\bibitem [{\citenamefont {D{\'{i}}az}\ \emph {et~al.}(2018)\citenamefont
  {D{\'{i}}az}, \citenamefont {Fang}, \citenamefont {Wang}, \citenamefont
  {Rosati}, \citenamefont {Skotiniotis}, \citenamefont {Calsamiglia},\ and\
  \citenamefont {Winter}}]{Diaz2018usingreusing}%
  \BibitemOpen
  \bibfield  {author} {\bibinfo {author} {\bibfnamefont {M.~G.}\ \bibnamefont
  {D{\'{i}}az}}, \bibinfo {author} {\bibfnamefont {K.}~\bibnamefont {Fang}},
  \bibinfo {author} {\bibfnamefont {X.}~\bibnamefont {Wang}}, \bibinfo {author}
  {\bibfnamefont {M.}~\bibnamefont {Rosati}}, \bibinfo {author} {\bibfnamefont
  {M.}~\bibnamefont {Skotiniotis}}, \bibinfo {author} {\bibfnamefont
  {J.}~\bibnamefont {Calsamiglia}}, \ and\ \bibinfo {author} {\bibfnamefont
  {A.}~\bibnamefont {Winter}},\ }\href {\doibase 10.22331/q-2018-10-19-100}
  {\bibfield  {journal} {\bibinfo  {journal} {{Quantum}}\ }\textbf {\bibinfo
  {volume} {2}},\ \bibinfo {pages} {100} (\bibinfo {year} {2018})}\BibitemShut
  {NoStop}%
\bibitem [{\citenamefont {Guhne}\ and\ \citenamefont
  {Toth}(2009)}]{GUHNE2009detection}%
  \BibitemOpen
  \bibfield  {author} {\bibinfo {author} {\bibfnamefont {O.}~\bibnamefont
  {Guhne}}\ and\ \bibinfo {author} {\bibfnamefont {G.}~\bibnamefont {Toth}},\
  }\href {\doibase https://doi.org/10.1016/j.physrep.2009.02.004} {\bibfield
  {journal} {\bibinfo  {journal} {Physics Reports}\ }\textbf {\bibinfo {volume}
  {474}},\ \bibinfo {pages} {1 } (\bibinfo {year} {2009})}\BibitemShut
  {NoStop}%
\bibitem [{\citenamefont {Vidal}\ and\ \citenamefont
  {Tarrach}(1999)}]{vidal1999robustness}%
  \BibitemOpen
  \bibfield  {author} {\bibinfo {author} {\bibfnamefont {G.}~\bibnamefont
  {Vidal}}\ and\ \bibinfo {author} {\bibfnamefont {R.}~\bibnamefont
  {Tarrach}},\ }\href {\doibase 10.1103/PhysRevA.59.141} {\bibfield  {journal}
  {\bibinfo  {journal} {Phys. Rev. A}\ }\textbf {\bibinfo {volume} {59}},\
  \bibinfo {pages} {141} (\bibinfo {year} {1999})}\BibitemShut {NoStop}%
\bibitem [{\citenamefont {Eisert}\ \emph {et~al.}(2007)\citenamefont {Eisert},
  \citenamefont {Brandao},\ and\ \citenamefont
  {Audenaert}}]{eisert2007quantitative}%
  \BibitemOpen
  \bibfield  {author} {\bibinfo {author} {\bibfnamefont {J.}~\bibnamefont
  {Eisert}}, \bibinfo {author} {\bibfnamefont {F.~G.}\ \bibnamefont {Brandao}},
  \ and\ \bibinfo {author} {\bibfnamefont {K.~M.}\ \bibnamefont {Audenaert}},\
  }\href {https://iopscience.iop.org/article/10.1088/1367-2630/9/3/046/meta}
  {\bibfield  {journal} {\bibinfo  {journal} {New Journal of Physics}\ }\textbf
  {\bibinfo {volume} {9}},\ \bibinfo {pages} {46} (\bibinfo {year}
  {2007})}\BibitemShut {NoStop}%
\bibitem [{\citenamefont {Kimmel}\ \emph {et~al.}(2014)\citenamefont {Kimmel},
  \citenamefont {da~Silva}, \citenamefont {Ryan}, \citenamefont {Johnson},\
  and\ \citenamefont {Ohki}}]{kimmel2014robust}%
  \BibitemOpen
  \bibfield  {author} {\bibinfo {author} {\bibfnamefont {S.}~\bibnamefont
  {Kimmel}}, \bibinfo {author} {\bibfnamefont {M.~P.}\ \bibnamefont
  {da~Silva}}, \bibinfo {author} {\bibfnamefont {C.~A.}\ \bibnamefont {Ryan}},
  \bibinfo {author} {\bibfnamefont {B.~R.}\ \bibnamefont {Johnson}}, \ and\
  \bibinfo {author} {\bibfnamefont {T.}~\bibnamefont {Ohki}},\ }\href {\doibase
  10.1103/PhysRevX.4.011050} {\bibfield  {journal} {\bibinfo  {journal} {Phys.
  Rev. X}\ }\textbf {\bibinfo {volume} {4}},\ \bibinfo {pages} {011050}
  (\bibinfo {year} {2014})}\BibitemShut {NoStop}%
\bibitem [{\citenamefont {Takeuchi}\ \emph {et~al.}(2019)\citenamefont
  {Takeuchi}, \citenamefont {Mantri}, \citenamefont {Morimae}, \citenamefont
  {Mizutani},\ and\ \citenamefont {Fitzsimons}}]{takeuchi2019resource}%
  \BibitemOpen
  \bibfield  {author} {\bibinfo {author} {\bibfnamefont {Y.}~\bibnamefont
  {Takeuchi}}, \bibinfo {author} {\bibfnamefont {A.}~\bibnamefont {Mantri}},
  \bibinfo {author} {\bibfnamefont {T.}~\bibnamefont {Morimae}}, \bibinfo
  {author} {\bibfnamefont {A.}~\bibnamefont {Mizutani}}, \ and\ \bibinfo
  {author} {\bibfnamefont {J.~F.}\ \bibnamefont {Fitzsimons}},\ }\href
  {https://www.nature.com/articles/s41534-019-0142-2} {\bibfield  {journal}
  {\bibinfo  {journal} {npj Quantum Information}\ }\textbf {\bibinfo {volume}
  {5}},\ \bibinfo {pages} {27} (\bibinfo {year} {2019})}\BibitemShut {NoStop}%
\bibitem [{\citenamefont {Chitambar}\ and\ \citenamefont
  {Gour}(2019)}]{chitamber2019quantum}%
  \BibitemOpen
  \bibfield  {author} {\bibinfo {author} {\bibfnamefont {E.}~\bibnamefont
  {Chitambar}}\ and\ \bibinfo {author} {\bibfnamefont {G.}~\bibnamefont
  {Gour}},\ }\href {\doibase 10.1103/RevModPhys.91.025001} {\bibfield
  {journal} {\bibinfo  {journal} {Rev. Mod. Phys.}\ }\textbf {\bibinfo {volume}
  {91}},\ \bibinfo {pages} {025001} (\bibinfo {year} {2019})}\BibitemShut
  {NoStop}%
\bibitem [{\citenamefont {Yuan}(2019)}]{Yuan2019hypothesis}%
  \BibitemOpen
  \bibfield  {author} {\bibinfo {author} {\bibfnamefont {X.}~\bibnamefont
  {Yuan}},\ }\href {\doibase 10.1103/PhysRevA.99.032317} {\bibfield  {journal}
  {\bibinfo  {journal} {Phys. Rev. A}\ }\textbf {\bibinfo {volume} {99}},\
  \bibinfo {pages} {032317} (\bibinfo {year} {2019})}\BibitemShut {NoStop}%
\bibitem [{\citenamefont {Gour}\ and\ \citenamefont
  {Wilde}(2018)}]{gour2018entropy}%
  \BibitemOpen
  \bibfield  {author} {\bibinfo {author} {\bibfnamefont {G.}~\bibnamefont
  {Gour}}\ and\ \bibinfo {author} {\bibfnamefont {M.~M.}\ \bibnamefont
  {Wilde}},\ }\href {https://arxiv.org/abs/1808.06980} {\enquote {\bibinfo
  {title} {Entropy of a quantum channel},}\ } (\bibinfo {year} {2018}),\
  \Eprint {http://arxiv.org/abs/1808.06980} {arXiv:1808.06980 [quant-ph]}
  \BibitemShut {NoStop}%
\bibitem [{\citenamefont {Gour}(2019)}]{gour2019comparison}%
  \BibitemOpen
  \bibfield  {author} {\bibinfo {author} {\bibfnamefont {G.}~\bibnamefont
  {Gour}},\ }\href {https://ieeexplore.ieee.org/document/8678741} {\bibfield
  {journal} {\bibinfo  {journal} {IEEE Transactions on Information Theory}\ }
  (\bibinfo {year} {2019})}\BibitemShut {NoStop}%
\bibitem [{\citenamefont {Liu}\ and\ \citenamefont
  {Yuan}(2019)}]{liu2019operational}%
  \BibitemOpen
  \bibfield  {author} {\bibinfo {author} {\bibfnamefont {Y.}~\bibnamefont
  {Liu}}\ and\ \bibinfo {author} {\bibfnamefont {X.}~\bibnamefont {Yuan}},\
  }\href {https://arxiv.org/abs/1904.02680} {\enquote {\bibinfo {title}
  {Operational resource theory of quantum channels},}\ } (\bibinfo {year}
  {2019}),\ \Eprint {http://arxiv.org/abs/1904.02680} {arXiv:1904.02680
  [quant-ph]} \BibitemShut {NoStop}%
\bibitem [{\citenamefont {Veitch}\ \emph {et~al.}(2014)\citenamefont {Veitch},
  \citenamefont {Mousavian}, \citenamefont {Gottesman},\ and\ \citenamefont
  {Emerson}}]{veitch2014resource}%
  \BibitemOpen
  \bibfield  {author} {\bibinfo {author} {\bibfnamefont {V.}~\bibnamefont
  {Veitch}}, \bibinfo {author} {\bibfnamefont {S.~A.~H.}\ \bibnamefont
  {Mousavian}}, \bibinfo {author} {\bibfnamefont {D.}~\bibnamefont
  {Gottesman}}, \ and\ \bibinfo {author} {\bibfnamefont {J.}~\bibnamefont
  {Emerson}},\ }\href {\doibase 10.1088/1367-2630/16/1/013009} {\bibfield
  {journal} {\bibinfo  {journal} {New Journal of Physics}\ }\textbf {\bibinfo
  {volume} {16}},\ \bibinfo {pages} {013009} (\bibinfo {year}
  {2014})}\BibitemShut {NoStop}%
\bibitem [{\citenamefont {Liu}\ \emph {et~al.}(2019)\citenamefont {Liu},
  \citenamefont {Shang}, \citenamefont {Yu},\ and\ \citenamefont
  {Zhang}}]{liu2019efficient}%
  \BibitemOpen
  \bibfield  {author} {\bibinfo {author} {\bibfnamefont {Y.-C.}\ \bibnamefont
  {Liu}}, \bibinfo {author} {\bibfnamefont {J.}~\bibnamefont {Shang}}, \bibinfo
  {author} {\bibfnamefont {X.-D.}\ \bibnamefont {Yu}}, \ and\ \bibinfo {author}
  {\bibfnamefont {X.}~\bibnamefont {Zhang}},\ }\href
  {https://arxiv.org/abs/1910.13730} {\enquote {\bibinfo {title} {Efficient and
  practical verification of quantum processes},}\ } (\bibinfo {year} {2019}),\
  \Eprint {http://arxiv.org/abs/1910.13730} {arXiv:1910.13730 [quant-ph]}
  \BibitemShut {NoStop}%
\bibitem [{\citenamefont {Zhu}\ and\ \citenamefont
  {Zhang}(2019)}]{zhu2019efficientgate}%
  \BibitemOpen
  \bibfield  {author} {\bibinfo {author} {\bibfnamefont {H.}~\bibnamefont
  {Zhu}}\ and\ \bibinfo {author} {\bibfnamefont {H.}~\bibnamefont {Zhang}},\
  }\href {https://arxiv.org/abs/1910.14032} {\enquote {\bibinfo {title}
  {Efficient verification of quantum gates with local operations},}\ }
  (\bibinfo {year} {2019}),\ \Eprint {http://arxiv.org/abs/1910.14032}
  {arXiv:1910.14032 [quant-ph]} \BibitemShut {NoStop}%
\bibitem [{\citenamefont {Bennett}\ \emph {et~al.}(1993)\citenamefont
  {Bennett}, \citenamefont {Brassard}, \citenamefont {Cr\'epeau}, \citenamefont
  {Jozsa}, \citenamefont {Peres},\ and\ \citenamefont
  {Wootters}}]{bennett1993Teleporting}%
  \BibitemOpen
  \bibfield  {author} {\bibinfo {author} {\bibfnamefont {C.~H.}\ \bibnamefont
  {Bennett}}, \bibinfo {author} {\bibfnamefont {G.}~\bibnamefont {Brassard}},
  \bibinfo {author} {\bibfnamefont {C.}~\bibnamefont {Cr\'epeau}}, \bibinfo
  {author} {\bibfnamefont {R.}~\bibnamefont {Jozsa}}, \bibinfo {author}
  {\bibfnamefont {A.}~\bibnamefont {Peres}}, \ and\ \bibinfo {author}
  {\bibfnamefont {W.~K.}\ \bibnamefont {Wootters}},\ }\href {\doibase
  10.1103/PhysRevLett.70.1895} {\bibfield  {journal} {\bibinfo  {journal}
  {Phys. Rev. Lett.}\ }\textbf {\bibinfo {volume} {70}},\ \bibinfo {pages}
  {1895} (\bibinfo {year} {1993})}\BibitemShut {NoStop}%
\end{thebibliography}

%

\onecolumngrid
\newpage
\begin{appendix}

\section{Heisenberg-Weyl Operators and generalized qudit Bell states} \label{Sec:quditBell}

The Heisenberg-Weyl group is a generalization of Pauli group. For a qudit Hilbert space with computational basis $\{\ket{l}\}_{l=0}^{d-1}$, we define
\begin{equation}
\begin{aligned}
Z &= \sum_{l=0}^{d-1} \exp(i\frac{2\pi}{d}l) \ket{l}\bra{l}, \\
X &= \sum_{l=0}^{d-1} \ket{l+1}\bra{l}, \\
\end{aligned}
\end{equation}
here $\ket{l+1}$ means $\ket{(l+1) \text{ mod } d}$.

The Heisenberg-Weyl operator $W(u,v)$ is defined to be
\begin{equation}
W(u,v) = X^u Z^v,
\end{equation}
with $u,v=0,1,...,d-1$. It is easy to verify that
\begin{equation}\label{Eq:Acomm}
\begin{aligned}
X^d=&Z^d=I, \quad (X^u)^\dagger = X^{-u}, \quad (Z^v)^\dagger = Z^{-v}, \\
X^uZ^v &= \exp\left(-i\frac{2\pi}{d}uv\right)Z^vX^u, \\
W(u,v)W(u',v') &= \exp\left(-i\frac{2\pi}{d}(uv' - vu')\right) W(u',v')W(u,v). \\
\end{aligned}
\end{equation}

Define $\Phi_{0,0} = \Phi_+ = \dfrac{1}{\sqrt{d}}\sum_{j=0}^{d-1} \ket{jj}$. The generalized qudit Bell states\cite{bennett1993Teleporting} are
\begin{equation}
\begin{aligned}
\ket{\Phi_{u,v}} &:= W(u,v)\ket{\Phi_+} \\
&= \frac{1}{\sqrt{d}}\sum_{l=0}^{d-1} \exp\left(\frac{2\pi i}{d}lv\right)\ket{l}_A\otimes \ket{l+u}_B,
\end{aligned}
\end{equation}
Denote $\Phi_{u,v}:=\ket{\Phi_{u,v}}\bra{\Phi_{u,v}}$. The qudit Bell states $\{\Phi_{u,v}\}_{u,v=0}^{d-1}$ form an orthonomal basis,
\begin{equation}
\begin{aligned}
\braket{\Phi_{u,v}|\Phi_{u',v'}} &= \bra{\Phi_+}(I\otimes W(u,v)^\dag W(u',v') )\ket{\Phi_+} \\
&= \exp\left(-i\frac{2\pi}{d}u_d v\right) \bra{\Phi_+}(I \otimes X^{u_d}Z^{v_d})\ket{\Phi_+} \\
&= \frac{1}{d} \exp\left(-i\frac{2\pi}{d}u_d v\right) \sum_{j=0}^{d-1} \sum_{m,l=0}^{d-1} \exp\left(i\frac{2\pi}{d}v_d l \right) \braket{j,j|m,l+u_d} \braket{m,l|j,j} \\
&= \frac{1}{d} \delta_{u_d,0} \exp\left(-i\frac{2\pi}{d}u_d v\right) \sum_{j=0}^{d-1} \exp\left(i\frac{2\pi}{d}v_d j \right)  \\
&= \delta_{u_d,0} \delta_{v_d,0}
\end{aligned}
\end{equation}
where $u_d:= u'-u, v_d:=v'-v$.

\section{Proof of Lemma \ref{Lm:Belldiag}} \label{Sec:proofs2}

\begin{proof}
The summation in Eq.~\eqref{Eq:bellT} is a``twirling" operation on the Weyl operators, and we frist prove that the twirling result is in the Bell-diagonal form. To prove this, we take out an operator element $\ket{\Phi_{w_1}}\bra{\Phi_{w_2}}$ in the Bell basis with $w_i=(u_i,v_i)$ and $\ket{\Phi_{w_i}}=\mathbb{I}\otimes W_i\ket{\Phi_+}$, and show that it vanishes after the twirling unless $w_1=w_2$. For simplicity, we denote $a=\exp\left(-i\frac{2\pi}{d}\right)$ and $(w,w')=(uv' - vu')$.
\begin{equation} \label{}
\begin{aligned}
&\sum_w \mc{W}^*(u,v) \otimes \mc{W}(u,v) (\ket{\Phi_{w_1}}\bra{\Phi_{w_2}})\\
= &\sum_w W^* \otimes W (\ket{\Phi_{w_1}}\bra{\Phi_{w_2}})W^T \otimes W^{\dag}\\
= &\sum_w W^* \otimes W \mathbb{I}\otimes W_1\ket{\Phi_+}\bra{\Phi_+}\mathbb{I}\otimes W_2^{\dag}W^T \otimes W^{\dag}\\
=&\sum_w a^{(w,w_1)} a^{-(w, w_2)}\mathbb{I}\otimes W_1 \sum_w W^* \otimes W   \ket{\Phi_+}\bra{\Phi_+}W^T \otimes W^{\dag}\mathbb{I}\otimes W_2^{\dag}\\
=&\sum_w e^{(w,w_1-w_2)} \mathbb{I}\otimes W_1 \ket{\Phi_+}\bra{\Phi_+}\mathbb{I}\otimes W_2^{\dag}\\
=&\sum_w e^{(w,w_1-w_2)}\ket{\Phi_{w_1}}\bra{\Phi_{w_2}}\\
=&\sum_{\{u,v\}} e^{-i\frac{2\pi}{d}(u\delta v'-v\delta u')} \ket{\Phi_{w_1}}\bra{\Phi_{w_2}}\\
=&\delta_{\{w_1-w_2\}} \ket{\Phi_{w_1}}\bra{\Phi_{w_2}}.
\end{aligned}
\end{equation}
where $w_1-w_2=(\delta u', \delta v')$. Here the first equality we apply the commuting relation in Eq.~\eqref{Eq:Acomm}; the fourth equality is due to the invariance of the maximally entangled state $\ket{\Phi_+}$ under the operation $W^* \otimes W$.

Then we prove the non-increasing of the passing probability $P(\epsilon,\Omega)$. Note that the twirling operation is a mixing of $d^2$ verification operators $\Omega_{\{u,v\}}=\mc{W}^*(u,v)\otimes \mc{W}(u,v) (\Omega)$ with equal probability $\Omega'=1/d^2\sum \Omega_{\{u,v\}}$. Thus, combing Observation \ref{Ob:convex} and Lemma \ref{Lm:Uinvary}, one has $P(\epsilon,\Omega')\leq 1/d^2\sum P(\epsilon,\Omega')=P(\epsilon,\Omega)$.
\end{proof}

\section{Proof of Lemma \ref{Lm:Bellsupport} } \label{Sec:proofs3}

\begin{proof}
For the strategy $\Omega$, we take a group of eigenvectors $\{\Psi_{j,l}\}$ corresponding to different eigenvalues $\{\lambda_j\}$. If the rank of $\Pi_{j}$ is larger than 1, then $l$ denotes the index in the degenerated space. We set $\Psi_{j,0}$ to be (one of) the Bell state in $\Pi_j$ if $\Pi_j\in \mb{S}_0(\Omega)$. Obviously, $\{\Psi_{j,l}\}$ are also the eigenvectors of $\Omega'$. We denote the set of maximally entangled basis in it as $\Theta(\Psi_{j,l})$.

Similar to the argument in the proof of Observation \ref{Ob:adBelldiag}, we now introduce the permutation-invariant basis
\begin{equation} \label{Eq:Psik}
\Psi_{\mb{k}} = \hat{ \mb{P} }_S\left( \bigotimes_{j,l} \Psi^{\otimes k_{j,l}}_{j,l} \right),
\end{equation}
where $\hat{\mb{P}}_S$ is the symmetrization operator, mixing all possible permutation with respect to different rounds, $k:=[k_{0,0}, k_{0,1}, ..., k_{J-1,L-1}]$ is a sequence of nonnegative integer number with $\sum_{j,l} k_{j,l} = N+1$. If $k$ is non-zero only on the set $\Theta(\Psi_{j,l})$, the generated symmetric state $\Psi_{\mb{k}}$ will also be the maximally entangled state. We denote the set of such $\Psi_{\mb{k}}$ as the symmetric Bell basis $\Theta_\mb{S}(\Psi_{j,l},N)$.

Since $p_{\mc{E}}$ and $f_{\mc{E}}$ in Eqs.~\eqref{Eq:pmcE},~\eqref{Eq:fmcE} only depend on the diagonal elements of $\Phi_{\mc{E}}$ in the basis of $\Omega$, without loss of generality, we may assume that the Choi state is diagonal in the product basis of $\{\Psi_{j,l}\}$. We only need to consider the Choi state $\Phi_{\mc{E}}$ as the mixture of $\Psi_{\mb{k}}$
\begin{equation} \label{Eq:PhiEc}
\Phi_{\mc{E}}(\mb{c}) = \sum_{\mb{k}\in\mb{K}} c_{\mb{k}} \Psi_{\mb{k}},
\end{equation}
where $\mb{c}=\{c_{\mb{k}}\}$ are the nonnegative mixing coefficients with $\sum_{k\in\mb{K}} c_{\mb{k}} = 1$, and $\mb{K}$ is the set of all possible $\mb{k}$. Since $\Psi_{\mb{k}}$ might not meet the requirement of Choi state, there is extra limitation on the coefficients:
\begin{equation} \label{Eq:coefflimitation}
\tr_B[\Phi_{\mc{E}}(\mb{c})] = \left(\frac{\mathbb{I}_d}{d}\right)^{\otimes (N+1)}.
\end{equation}
We denote the set of legal coefficients $\mb{c}$ satisfying Eq.~\eqref{Eq:coefflimitation} as $\mb{C}(\Psi_\mb{k})$, which is determined by $\{\Psi_\mb{k}\}$. Note that, due to the linearity of Eq.~\eqref{Eq:coefflimitation}, $\mb{C}(\Psi_\mb{k})$ is a convex set.

According to Eq.~\eqref{Eq:pmcE},~\eqref{Eq:fmcE},~\eqref{Eq:Psik}, and \eqref{Eq:PhiEc}, one have
\begin{equation}
\begin{aligned}
p_\mc{E}(\mb{c}) &= \sum_{k\in\mb{K}} c_{\mb{k}} \eta_{\mb{k}}(\vec{\lambda}), \quad \mb{c}\in\mb{C}(\Psi_\mb{k})\\
f_\mc{E}(\mb{c}) &= \sum_{k\in\mb{K}} c_{\mb{k}} \zeta_{\mb{k}}(\vec{\lambda}), \quad \mb{c}\in\mb{C}(\Psi_\mb{k})
\end{aligned}
\end{equation}
where $\vec{\lambda}:= (\lambda_{0,0}, \lambda_{0,1}, ..., \lambda_{J-1,L-1})$ is the eigenvalues of $\Omega$ or $\Omega'$, and
\begin{equation}
\begin{aligned}
\eta_{\mb{k}}(\vec{\lambda}) &:= p(\mb{k}) = \sum_{i|k_i>0}\frac{k_i}{(N+1)}\lambda_i^{k_i-1}\Pi_{j\neq i|k_j>0} \lambda_j^{k_j}, \\
\zeta_{\mb{k}}(\vec{\lambda}) &:= f(\mb{k}) = \frac{k_1}{(N+1)}\Pi_{i|k_i>0} \lambda_i^{k_i}.
\end{aligned}
\end{equation}
Here $\lambda_i^0$ is set to be $1$, even if $\lambda_i=0$. Due to the degeneration of $\{\lambda_{j,l}\}$, for different $\mb{k}$, the values of $\eta_{\mb{k}}(\vec{\lambda})$ and $\zeta_{\mb{k}}(\vec{\lambda})$ could be the same. The optimization value of $F(N,\delta,\Omega)$ is determined by the 2-D region of $(p_\mc{E}(\mb{c}), f_\mc{E}(\mb{c}))$ with legal $\mb{c}\in\mb{C}(\Psi_\mb{k})$.

Our main idea to prove $F(N,\delta,\Omega') \geq F(N,\delta,\Omega)$ is to show that the optimizing area of $\Omega'$ belongs to the optimizing area of $\Omega$, that is, the point $(p_\mc{E}(\mb{c}), f_\mc{E}(\mb{c}))$ by coefficients $\mb{c}$ with $\Omega'$ can always be achieved by the same coefficients $\mb{c}$ with $\Omega$.

First, for the strategy $\Omega'$, all the value of $(p_\mc{E}(\mb{c}), f_\mc{E}(\mb{c}))$ with $\mb{c}\in\mb{C}(\Psi_\mb{k})$ can be achieved even if we only consider the symmetric Bell basis $\Psi_\mb{k}\in\Theta_\mb{S}(\Psi_{j,l},N)$. Since the symmetric bell basis terms $\{\Psi_\mb{k}\}$ naturally satisfy Eq.~\eqref{Eq:coefflimitation}, the Bell-coefficients $\{c_\mb{k}\}$ can then be chosen freely, without any extra requirements than non-negative and normalization. On the other hand, due to the degeneracy of eigenvalues, i.e., $\tilde{\lambda}_j\in\lambda(\mb{S}_0(\Omega))$, all the values of $\eta_{\mb{k}}(\vec{\lambda})$ and $\zeta_{\mb{k}}(\vec{\lambda})$ can be realized by the symmetric Bell basis set $\Theta_\mb{S}(\Psi_{j,l},N)$.

Second, for each symmetric Bell strategy $\{c_\mb{k}\}$ of $\Omega'$, one can realize it on $\Omega$ with the same value of $(p_\mc{E}(\mb{c}), f_\mc{E}(\mb{c}))$. Note that the feasible coefficients region $\mb{C}(\Psi_\mb{k})$ for $\Omega$ and $\Omega'$ are the same. Moreover, all the values of $\eta_{\mb{k}}(\vec{\lambda})$ and $\zeta_{\mb{k}}(\vec{\lambda})$ for the symmetric Bell basis are the same.
\end{proof}

\end{appendix}

\end{document}